\newcommand{\R}{\mathbb{R}}
\newcommand{\Z}{\mathbb{Z}}
\newcommand{\CC}{\mathbb{C}}
\newcommand{\mv}{m}
\newcommand{\SB}{\mathcal{B}}
\newcommand{\mvb}{\mathfrak m}
\newcommand{\PPW}{\Pi_k}
\newtheorem{definition}{Definition}
\newtheorem{theorem}{Theorem}
\newtheorem{corollary}{Corollary}
\newtheorem{proposition}{Proposition}
\newtheorem{remark}{Remark}
\definecolor{akcolor}{rgb}{0.65, 0.15, 0.6}
\newcommand{\ak}[1]{\textcolor{akcolor}{#1}}
\definecolor{ikcolor}{rgb}{0.05, 0.15, 0.6}
\newcommand{\ik}[1]{\textcolor{ikcolor}{#1}}
\definecolor{vbcolor}{rgb}{0.35, 0.15, 0.6}
\newcommand{\vb}[1]{\textcolor{vbcolor}{#1}}
\definecolor{bmcolor}{rgb}{0.9, 0.3, 0}
\newcommand{\bm}[1]{\textcolor{bmcolor}{#1}}
\DeclareMathOperator*{\argmax}{arg\,max}
\DeclareMathOperator*{\argmin}{arg\,min}
\title{Reconstructing Graph Signals from Noisy Dynamical Samples}
\author{Akram Aldroubi, Victor Bailey, Ilya Krishtal, Brendan Miller, Armenak Petrosyan}
\begin{document}

\maketitle

\begin{abstract}

We investigate the dynamical sampling space-time trade-off problem within a graph setting. Specifically, we derive necessary and sufficient conditions for space-time sampling that enable the reconstruction of an initial band-limited signal on a graph. Additionally, we develop and test numerical algorithms for approximating the optimal placement of sensors on the graph to minimize the mean squared error when recovering signals from time-space measurements corrupted by i.i.d.~additive noise. Our numerical experiments demonstrate that our approach outperforms previously proposed algorithms for related problems.

\end{abstract}

\section{Introduction}

Sampling and reconstruction is a fundamental problem in the study of various signals, for example in communication 
\cite{AG01, Unser:00}.   In modern applications,  there are many situations  where the functions to be sampled and recovered are evolving in time.  For example, in video processing  
a function $f$ that models the video depends not just on space but also on time. In other instances, time dependence may describe diffusion, transport, or other physical phenomena \cite{ GRUV15,JTSS20, LMT21, LDV11, STWZ20}. The classical sampling framework, however,  is not well suited for such situations, because it does not differentiate between spatial and temporal variables. The time direction plays a special role that differs from the spatial dimensions among which there are, often, no preferential directions. 
%
In some situations, however, spatial directions are limited and form a graph-like structure. Thus, the sampling and reconstruction of functions evolving on graphs are crucial in modeling signals
across various networks like cellular and internet communications. These signals often represent data related to epidemiology, environmental science, and other areas \cite{BISVP08, RDCV12}. 

\smallskip \noindent When studying a 
signal that
varies in time 
because of an underlying evolution process, designing a space-time sampling procedure must take into account the driving evolution operator (see, e.g., \cite{LV09, MBD15, RCLV11, RDCV12}).  The mathematical theory for the reconstruction of space-time signals is recent  and is commonly called {\em dynamical sampling} \cite { AKh17,  AVVP23, ACCMP17,  ADK13, AP23, B23, BK23, CMPP20, CH19, HX22, T10, UZ21, ZLL17_2}. A related active area of research is called {\em mobile sampling} \cite{RBD21, GRUV15, JM22, JMV24}.

\smallskip \noindent  Several problems in dynamical sampling where $\{(t_i,x_j)\}\subset [0,\infty) \times \R^d$ or $\{(t_i,x_j)\}\subset [0,\infty) \times \Z^d$ have recently been solved \cite { AGHJKR21, UZ21}.

\smallskip \noindent  In this paper, we study the space-time sampling trade-off for finite graphs. We follow a deterministic approach as opposed to the probabilistic one in \cite{HNT21}. Specifically,
for a graph $G=(V,E)$, a graph operator $A$, 
and Paley-Wiener space $PW_\omega$ (see Definitions \ref {IVPG} and \ref {PW} below), we consider an initial value problem  (IVP) 
given by 
\begin{equation}\label{DFM1}
	\begin{cases}
	\dot{u}(t)=Au(t)\\
	u(0)=u_0,
	\end{cases}
	\quad t\in\mathbb R_+,\ u_0\in PW_\omega,
\end{equation} 
where $u$ is a vector-valued function defined on the nodes $V = \{1,\ldots, d\}$ of $G$. We note that, due to the fact that $A$ is a graph operator, the condition  $u_0\in PW_\omega$ implies that the values of $u$ remain in $PW_\omega$, i.e.,  $u:\R_+\to PW_\omega\subset \ell^2(V)$. We investigate the problem of sensor placement for the recovery of $u_0$ from the corresponding space-time samples of $u$.

\subsection{Contributions.}
\smallskip \noindent
The contributions of this paper are as follows.

\smallskip \noindent
First, in Section \ref{RPW}, we provide necessary and sufficient conditions on a set of vertices where the solution $u$ needs to be sampled in time to recover any initial condition $u_0 \in PW_\omega$. 

\smallskip \noindent
Next, in Section \ref{ONS}, given a set of vertices $\Omega \subset V$ for which recovery is possible, 
we look for an optimal such set with respect to recovery from noisy space-time samples. In particular, we consider 
samples 
with i.i.d.~additive noise and wish to minimize the mean squared error (MSE) of recovery using a sampling set of vertices of fixed size $|\Omega| = s$. 

\smallskip \noindent
Finding an exact solution to the above problem is usually prohibitively computationally expensive as the problem is combinatorial in nature. Thus, in Section \ref{NA}, we explore numerical algorithms for approximating the solution. In particular, we present a greedy algorithm and a few algorithms based on various convex relaxations of the original problem. 

\begin{figure}[H]
    \centering
    {\includegraphics[width=0.45\linewidth]{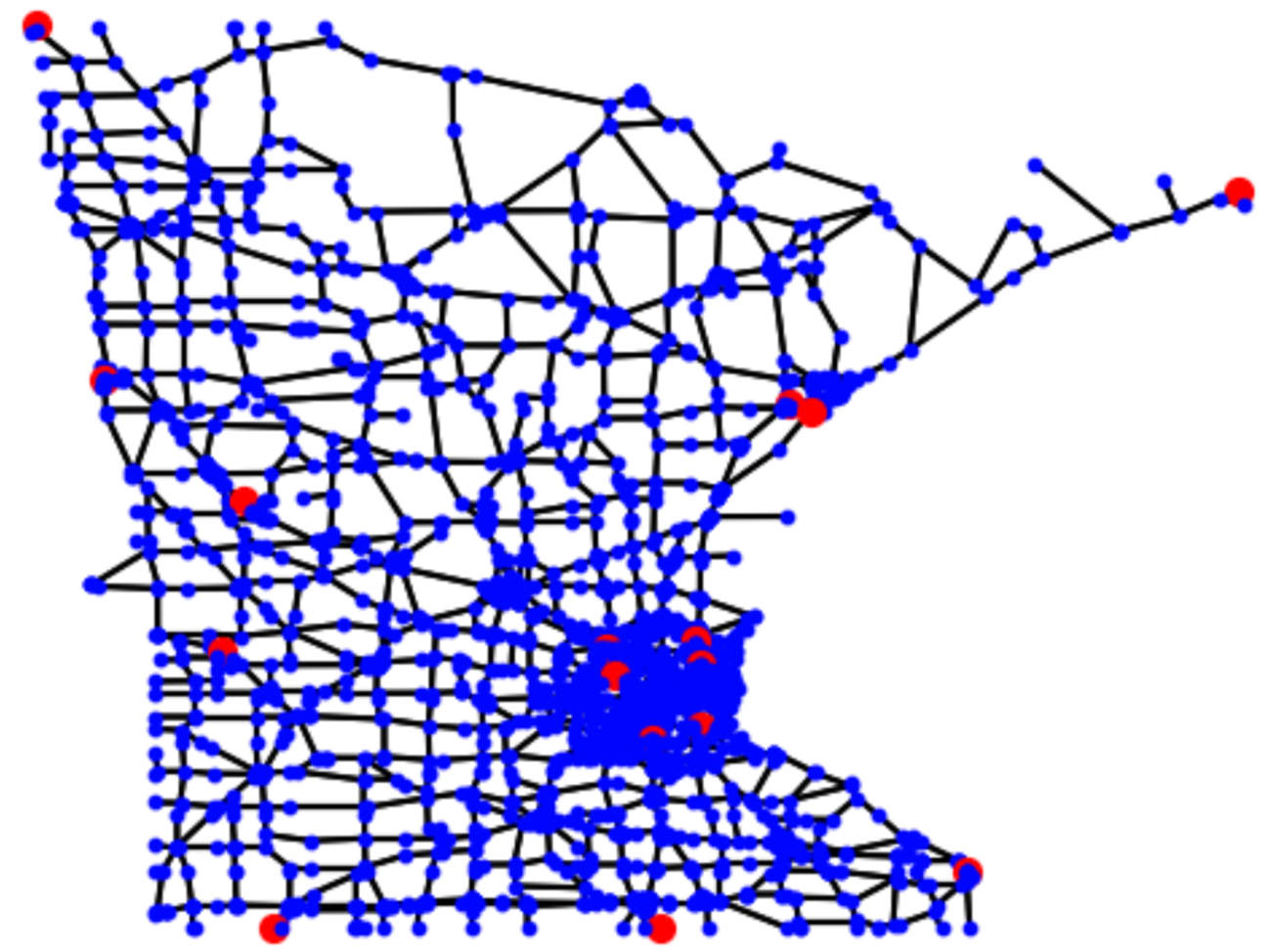}}
    \qquad
    {\includegraphics[width=0.45\linewidth]{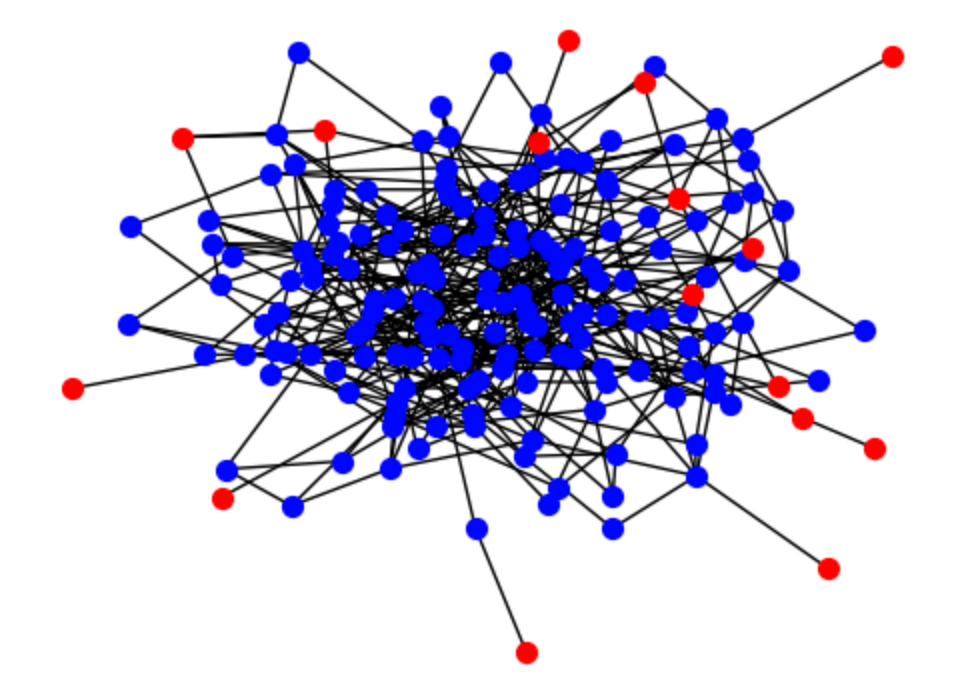}}
    \caption{Greedy algorithm selecting 16 sampling locations (in red) to minimize the MSE of signal reconstruction over $PW_{\lambda_{10}}$ on the Minnesota roadmap graph (left) and a random connected graph (right), where $\lambda_{10}$ is the 10-th smallest eigenvalue of the graph operator.}
    \label{Graphs}
\end{figure}

\smallskip \noindent
Finally, we compare the relative numerical performance of our algorithms in Section \ref{NP}. 
 We note that our tests results lead us to a heuristic conclusion that the optimal placement of sensors tends to favor ``remote'' vertices of the graph, i.e.~those that are not easily accessible from the rest of the graph (see Fig.~\ref{Graphs} for an illustration).


\subsection{Basic definitions and notation.} 
 \noindent Symbols $G=(V,E)$  will denote a finite undirected graph, where $V = V(G)= \{1,\ldots,d\}$  is the set of vertices, $d = |V|$ is the cardinality of $V$,  and $E = E(G) \subseteq V\times V$ is the set of edges.

 \smallskip
 
\noindent The degree of  vertex $i$, denoted by $deg(i)$, is the number of edges that are connected to $i$.
\smallskip

\noindent 
Next, we define a few standard operators associated with a graph. In this paper, we routinely identify operators on $\mathbb C^d$ with their matrices in the standard basis.

 \smallskip
\noindent For a graph $G = (V, E)$, $V = V(G)= \{1,\ldots,d\}$, the Laplacian $\Delta$ of $G$ is the $d \times d$ matrix $\Delta = D - C$, where $D$ is a diagonal matrix such that $d_{ii} = deg(i)$ and  $C$ is the $d \times d$ adjacency matrix of $G$, i.e.,  $c_{ij} = 1$ if there is an edge between $i$ and $j$ and $c_{ij} = 0$ otherwise. Clearly, 
the Laplacian $\Delta$ is a symmetric matrix. Therefore, the eigenvalues of $\Delta$ are real and $\Delta$ has an orthonormal set of eigenvectors. 
\smallskip


\smallskip

\noindent In order for the IVP \eqref{DFM1} to be associated with a graph $G$ we must make restrictions on the possible operators $A$. To this end, we use the Laplacian operator $\Delta$ on $G$ to relate $A$ to the topology of $G$ similarly to the way it was done in \cite {IBLL20}.

\begin{definition} \label {IVPG}
An IVP \eqref{DFM1} is associated with a graph $G$ if the operator $A$ is positive semidefinite and has eigenvectors coinciding with those of the Laplacian $\Delta$. Such an operator $A$ will be called a \emph{graph operator} (associated with $G$).
\end{definition}
\smallskip

\noindent 
In particular, a graph operator commutes with the Laplacian of $G$.
Decomposing $\Delta$ as $B^*D_\Delta B$ where $B$ is unitary and $D_\Delta$ is diagonal, we get that $A$ is a graph operator if and only if $A$ can be written as $A=B^*D_AB$, where $D_A$ is diagonal with real non-negative entries. Thus, a graph operator $A$ is orthogonally diagonalizable. 
\smallskip

\noindent
In the following definition, we introduce Paley-Wiener spaces on graphs as it was done in \cite{Pes08}.

\begin{definition} \label{PW}
The \emph{Paley-Wiener space} $PW_{\omega}(A) \subseteq \ell^2(V) \simeq  \mathbb{C}^d$ of  a graph operator $A$ on $G$ is the span of the eigenvectors corresponding to its eigenvalues 
bounded above by $\omega$. In particular, if $A$ has $n \leq d$ distinct real eigenvalues $0\le\lambda_1 < \lambda_2 < \ldots <\lambda_n$, then the space $PW_{\lambda_k}(A)$,  where $k \leq n$, is the span of the eigenvectors of $A$ corresponding to the eigenvalues $\{\lambda_1, \ldots, \lambda_k\}$. 
\end{definition}

\noindent If there is no ambiguity regarding the graph operator, then we will write $PW_{\omega}$ instead of $PW_{\omega}(A)$. Additionally, in the case where $\omega = \lambda_k$, for some eigenvalue $\lambda_k$ of $A$, we will denote the Paley-Wiener space $PW_{\lambda_k}$ as $PW_{k}$.

\smallskip

\section {Reconstruction in Paley-Wiener Spaces} \label{RPW}
The main result in this section is Theorem \ref{mainthm} which provides necessary and sufficient conditions on which vertices need to be sampled in time to recover any initial condition $u_0$. This result is based on Theorem \ref{TAD0} from \cite{ACMT17}, provided below. Before we begin, we introduce the following definitions and notation.

\smallskip
\noindent Let $D=diag(\lambda_1 I_1, \ldots,\lambda_nI_n)$ be  a $d\times d$ diagonal matrix where $I_i$ are identity matrices of various sizes. We denote by $P_j$ the orthogonal projection in $\CC^d$ onto the eigenspace of $D$ associated to the eigenvalue $\lambda_j$. In particular, $P_j$ is a projection onto a span of a subset of the standard basis $\{e_i:i =1,\ldots,d\}$. 

\smallskip
 \noindent For a $d\times d$ matrix $A$ and a vector $b \in \CC^d$, the $A\text{-annihilator}$ of a vector $b$ is the monic polynomial $q_b(x)=a_0+a_1x+\cdots+x^{r_b}$ of smallest degree such that $q_b(A)b = 0$.


\begin {theorem}[\cite {ACMT17}]\label {TAD0}
Let $\Omega \subset \{1, \dots, d\}$ and $\{b_i: i \in \Omega \}$  vectors in $\CC^d$. Let $D$ be a diagonal matrix and $r_i$ the degree of the $D$-annihilator of $b_i$. Set $l_i=r_i-1$. Then  $\{D^{j}b_i: \; i\in \Omega,  \, j=0, \dots, l_i\}$ is a frame of $\CC^d$ if and only if $\{P_j(b_i):i \in \Omega\}$ form a frame of $P_j(\CC^d)$, $j=1,\dots ,n$.
\end{theorem}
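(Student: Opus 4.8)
The plan is to recast the statement entirely in terms of spanning sets, since in the finite-dimensional space $\CC^d$ a family of vectors is a frame exactly when it spans. Writing $S_i = \{ j : P_j(b_i) \neq 0 \}$ for the set of eigenspaces on which $b_i$ has a nonzero component, the heart of the argument is to show that the cyclic subspace generated by each $b_i$ under $D$ decomposes orthogonally along the eigenspaces of $D$, namely that
\[
\operatorname{span}\{ D^k b_i : k = 0, \dots, l_i \} = \bigoplus_{j \in S_i} \CC\, P_j(b_i).
\]
Once this is established, summing over $i \in \Omega$ and invoking the orthogonality of the eigenspaces $P_j(\CC^d)$ reduces the global spanning condition to the eigenspace-by-eigenspace conditions in the statement.

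First I would pin down the $D$-annihilator. Since $D = \sum_{j=1}^n \lambda_j P_j$ with the $P_j$ mutually orthogonal projections summing to the identity, any polynomial $q$ satisfies $q(D) b_i = \sum_{j=1}^n q(\lambda_j) P_j(b_i)$. Hence $q(D) b_i = 0$ forces $q(\lambda_j) = 0$ for every $j \in S_i$, and the monic polynomial of least degree with this property is $q_{b_i}(x) = \prod_{j \in S_i}(x - \lambda_j)$. This identifies $r_i = |S_i|$, so $l_i = |S_i| - 1$, and it also yields the explicit expansion $D^k b_i = \sum_{j \in S_i} \lambda_j^{\,k} P_j(b_i)$ for each $k \geq 0$.

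The main step, and the one I expect to require the most care, is the change of basis between the powers $\{D^k b_i\}_{k=0}^{l_i}$ and the projections $\{P_j(b_i)\}_{j \in S_i}$. The expansion above writes the $l_i + 1 = |S_i|$ vectors $D^0 b_i, \dots, D^{l_i} b_i$ as linear combinations of the $|S_i|$ nonzero, mutually orthogonal vectors $\{P_j(b_i)\}_{j \in S_i}$, with coefficient matrix $(\lambda_j^{\,k})_{k,\,j}$. Because the $\lambda_j$ with $j \in S_i$ are distinct, this is a square Vandermonde matrix and hence invertible, so the two families span the same subspace, giving the displayed decomposition. I would be careful to note that the indices $j \notin S_i$ contribute nothing precisely because $P_j(b_i) = 0$, which is what makes the count $l_i + 1 = |S_i|$ exact rather than merely an inequality.

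Finally I would assemble the pieces. Taking the union over $i \in \Omega$ gives
\[
\operatorname{span}\{ D^k b_i : i \in \Omega,\ k = 0, \dots, l_i \} = \sum_{i \in \Omega} \bigoplus_{j \in S_i} \CC\, P_j(b_i) = \bigoplus_{j=1}^n \operatorname{span}\{ P_j(b_i) : i \in \Omega \},
\]
where the last equality regroups the sum by eigenvalue and uses that subspaces lying in distinct (orthogonal) eigenspaces sum directly. Since $\CC^d = \bigoplus_{j=1}^n P_j(\CC^d)$ and $\operatorname{span}\{P_j(b_i): i \in \Omega\} \subseteq P_j(\CC^d)$, the left-hand side equals all of $\CC^d$ if and only if $\operatorname{span}\{P_j(b_i) : i \in \Omega\} = P_j(\CC^d)$ for every $j = 1, \dots, n$, which is exactly the asserted equivalence.
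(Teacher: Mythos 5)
Your argument is correct and complete: the identification of the annihilator as $\prod_{j\in S_i}(x-\lambda_j)$, the Vandermonde change of basis showing $\operatorname{span}\{D^k b_i\}_{k=0}^{l_i}=\operatorname{span}\{P_j b_i\}_{j\in S_i}$, and the regrouping by eigenvalue together give exactly the asserted equivalence, using that in $\CC^d$ a finite family is a frame precisely when it spans. Note that the paper itself does not prove this statement --- it is imported from \cite{ACMT17} --- so there is no in-paper proof to compare against; your route is the standard one, and it is the same Vandermonde-based mechanism the authors use in their draft proofs of the closely related completeness results (expanding $D^{l}b_i=\sum_j\lambda_j^{l}P_j b_i$ and inverting the Vandermonde matrix built from the distinct eigenvalues).
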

\medskip

\noindent Before we introduce our result on the reconstruction of a functions in a Paley-Wiener space from space-time samples, we need the following definitions and notation.
\begin{definition}
Let $\Omega \subset V =\{1, \ldots, d\}$ be a subset of vertices of a graph $G$. The sub-sampling matrix $S_{\Omega} \in \CC^{d \times d}$ is a diagonal matrix defined by $$(S_{\Omega})_{jj} = \begin{cases} 
      1, & j \in \Omega; \\
      0, & \mbox{otherwise.}
   \end{cases}
$$
\end{definition}

\noindent Let $A \in \CC^{d \times d}$ be a graph operator. Let $W = PW_{k}$ for some $k$ and  let $I_W: W \hookrightarrow \ell^2(V)$ be the natural inclusion. The symbols $\mathcal{A}_{(W,\Omega)}$ will denote the  $Ld \times d$ matrix
 \begin {equation} \label {AOmatrix}
 \mathcal{A}_{(W,\Omega)} = [I_W^*S_\Omega, I_W^*AS_\Omega ,...,I_W^*A^{L-1}S_\Omega]^*.
 \end{equation} 
 
\noindent  When $W = \CC^{d}$, we will denote $\mathcal{A}_{(W,\Omega)}$ as $\mathcal{A}_{\Omega}$. 

 \medskip

\noindent For a given graph operator $A$ with eigenvalues $\lambda_1 < \lambda_2 < \ldots <\lambda_n$, we denote by $Q_j$ the orthogonal projection in $\CC^d$ onto the span of the eigenvectors of $A$ corresponding to the eigenvalue $\lambda_j$. Given a Paley-Wiener space $PW_k$ we denote by $\PPW$ the orthogonal projection onto $PW_k$.

\medskip
\noindent Let $A$ be a graph operator. Then $A=B^*DB$ and every $f \in PW_{k}$ can be written as $f=B^*_1x$ (where $B_1$ is the submatrix of $B$ such that the columns of $B^*_1$ are the eigenvectors associated with $PW_{k}$, and $x \in \CC^p$ with $p=\dim PW_{k}$). Define $b_{1i}$ as the $i$-th column of $B_1$. 
Let $r_{1i}$  be the degree of the $D_k$ annihilator of $b_{1i}$, where $D_k=diag(\lambda_1 I_1, \dots,\lambda_kI_k)$, and $\ell_{1i}=r_{1i}-1$. Using this notation we have the following result. 

\begin{theorem} \label {mainthm} Let $A$ be a graph operator on $G= (V, E)$, $\Omega \subset V = \{1, \ldots d\}$, and consider the Paley-Wiener space $PW_{k}$. Then the following are equivalent.
\begin {enumerate}
\item Any $f \in PW_{k}$ can be recovered from the space-time samples $\{f(i), Af(i), \dots, A^{l_{1i}}f(i): i \in \Omega\}$. 
\item The set of vectors $\{P_j(b_{1i}):i \in \Omega\}$ form a frame for the subspaces $P_j(\CC^d)$, $j=1,\dots ,k$.
\item The set $\{ \PPW A^{l}e_i : l \in \{0, 1, \ldots r_{1i}-1 \}$, $i \in \Omega\}$, where $r_{1i}$ is the degree of the $A$-annihilator of $e_i$, is a frame for $PW_{k}$. That is, $\{ A^{l}e_i : l \in \{0, 1, \ldots r_{1i}-1 \}$, $i \in \Omega\}$ is an outer frame for $PW_{k}$.
\item  The set of vectors $\{Q_j(e_{i}) : i \in \Omega\}$ form a frame for the subspaces $Q_j(\mathbb{C}^d)$, $j=1,\dots ,k$. 
\item The matrix  $\mathcal{A}_{(W,\Omega)}$ in \eqref {AOmatrix} has a left inverse.
\end {enumerate}
\end{theorem}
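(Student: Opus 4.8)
The statement is a five-way equivalence, so the plan is to route all the implications through condition (3), letting the cited Theorem~\ref{TAD0} supply the only nontrivial input once the problem is transported into the eigenbasis of $A$. Throughout I will use that, in finite dimensions, ``frame'', ``spanning set'', and ``injective analysis operator'' are interchangeable, and that a matrix has a left inverse exactly when it is injective. The computational backbone is the single identity
\[
\PPW A^{l} e_i = B_1^{*} D_k^{\,l}\, b_{1i},
\]
where $B_1^{*}\colon \CC^{p}\to PW_k$ is the isometric inclusion whose columns are the $A$-eigenvectors spanning $PW_k$, so that $B_1 B_1^{*}=I_p$ and $B_1^{*}B_1=\PPW$. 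First I will verify this by writing $A=B^{*}DB$, using that the rows of the unitary $B$ are orthonormal to get $B_1 B^{*}=[\,I_p\mid 0\,]$, and reading off the first $p$ coordinates of $D^{l}(Be_i)$, which are precisely $D_k^{\,l} b_{1i}$.

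With the identity in hand, (1) $\Leftrightarrow$ (3) is immediate: since $A$ is self-adjoint, every sample obeys $(A^{l} f)(i)=\langle A^{l} f, e_i\rangle = \langle f, A^{l} e_i\rangle = \langle f, \PPW A^{l} e_i\rangle$ for $f\in PW_k$, so the sampling map is the analysis operator of $\{\PPW A^{l}e_i\}$, and every $f$ is recoverable exactly when this family spans $PW_k$. For (3) $\Leftrightarrow$ (5) I read the matrix \eqref{AOmatrix} directly: the columns of $[\,I_W^{*}S_\Omega,\,I_W^{*}AS_\Omega,\dots,I_W^{*}A^{L-1}S_\Omega\,]$ are exactly the vectors $I_W^{*}A^{l}e_i=\PPW A^{l}e_i$ with $i\in\Omega$ (the columns indexed by $i\notin\Omega$ vanish because $S_\Omega$ kills them), so $\mathcal{A}_{(W,\Omega)}$ is the analysis operator of that same family; taking $L\ge \max_{i\in\Omega} r_{1i}$ so no needed power is dropped, it has a left inverse iff the family is a frame for $PW_k$.

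Next, (3) $\Leftrightarrow$ (2) is where Theorem~\ref{TAD0} enters. By the displayed identity, $\{\PPW A^{l}e_i\}$ is the image under the isometry $B_1^{*}$ of $\{D_k^{\,l} b_{1i}\}$, hence it is a frame for $PW_k$ iff $\{D_k^{\,l} b_{1i}: i\in\Omega,\ 0\le l\le r_{1i}-1\}$ is a frame for $\CC^{p}$. This is exactly the left-hand condition of Theorem~\ref{TAD0} applied to the diagonal matrix $D_k$ and the vectors $b_{1i}$ (whose $D_k$-annihilator has degree $r_{1i}$), and the theorem converts it into ``$\{P_j(b_{1i}):i\in\Omega\}$ is a frame for $P_j(\CC^{p})$, $j=1,\dots,k$'', which is condition (2) read under the embedding $\CC^{p}\hookrightarrow\CC^{d}$. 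Finally, for (2) $\Leftrightarrow$ (4) I use $Q_j=B^{*}P_j B$, which together with the identity gives $Q_j e_i = B_1^{*}P_j(b_{1i})$ for each $j\le k$; since $B_1^{*}$ is an isometric isomorphism of $\CC^{p}$ onto $PW_k$ carrying $P_j(\CC^{p})$ onto $Q_j(\CC^{d})$, the family $\{Q_j(e_i)\}$ is a frame for $Q_j(\CC^{d})$ iff $\{P_j(b_{1i})\}$ is a frame for $P_j(\CC^{p})$.

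The conceptual content is carried by Theorem~\ref{TAD0}, so the real labor is in the transport step, and I expect the \emph{main obstacle} to be the bookkeeping that keeps every frame on its correct space while the isometry $B_1^{*}$ moves data between $\CC^{p}$ and $PW_k\subseteq\CC^{d}$, and that correctly matches the projections $P_j$ against $Q_j$. A second subtlety I must reconcile is the truncation: the preamble fixes $r_{1i}$ through the $D_k$-annihilator of $b_{1i}$, whereas condition (3) phrases it through the $A$-annihilator of $e_i$, and these degrees can differ because $e_i$ may have components in eigenspaces $\lambda_j$ with $j>k$. I will resolve this by noting that only $\lambda_1,\dots,\lambda_k$ survive the projection $\PPW$, so $\{\PPW A^{l}e_i\}_{l\ge 0}$ already stabilizes at $l=r_{1i}-1$ for the $D_k$-annihilator degree; the extra powers are redundant and leave the span, hence the frame property, unchanged. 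The remaining routine point is to confirm that any $L\ge\max_i r_{1i}$ makes the passage to the finite matrix in \eqref{AOmatrix} lossless.
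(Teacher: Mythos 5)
Your proof is correct and follows essentially the same route as the paper's: the same key identity $\langle A^{l}f,e_i\rangle=\langle x,D_k^{l}b_{1i}\rangle$ (equivalently $\PPW A^{l}e_i=B_1^*D_k^{l}b_{1i}$), the same appeal to Theorem~\ref{TAD0} to pass to the projections $P_j$, and the same isometric identification of $\CC^p$ with $PW_k$ for the $(2)\Leftrightarrow(4)$ and $(1)\Leftrightarrow(5)$ steps. The only differences are organizational (you hub the equivalences at (3) and phrase $(2)\Leftrightarrow(4)$ via conjugation by the isometry rather than the paper's explicit frame-bound computation), and your remark resolving the discrepancy between the $D_k$-annihilator of $b_{1i}$ and the $A$-annihilator of $e_i$ is a worthwhile addition.
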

\begin{proof}

\noindent We start with proving that (1) and (2) are equivalent. Let $D_k=diag(\lambda_1 I_1, \dots,\lambda_k I_k)$.   Since $ f \in PW_{k}$, we can write $f=B_1^*x$ where $x \in \CC^p$ and $p=\dim PW_{k}$. Then, 
\[
\begin{split}A^mf(i)&=\langle A^mf,e_i\rangle\\
&=\langle D_k^mx,B_1e_i\rangle=\langle x,D_k^mb_{1i}\rangle.
\end{split}
\]
Thus, if  $\{P_j(b_{1i}):i \in \Omega\}$ form a frame of $P_j(\CC^p)$, $j=1,\dots ,k$, then using Theorem \ref {TAD0} and the identity above, $x \in \CC^p$ can be recovered from $\{f(i), Af(i), \dots, A^{l_{1i}}f(i): i \in \Omega\}$. Hence, $f=B_1^*x$ can be recovered from the space-time samples $\{f(i), Af(i), \dots, A^{l_{1i}}f(i): i \in \Omega\}$. 

\smallskip \noindent For the converse, assume that for some $j_0 \in \{1,\dots ,k\}$, $\{P_{j_0}(b_{1i}):i \in \Omega\}$ is not a frame for $P_{j_0}(\CC^p)$. Then, there exists a vector $x_0\ne \bf{0}$ such that  $x_0\perp span\{b_{1i}: i \in \Omega\}$ and for  $f_0=B_1^*x_0 \in PW_{k}$ we have $A^mf_0(i)=\langle A^mf_0,e_i\rangle = \langle x_0,D_k^mb_{1i}\rangle=0$ for each $i\in \Omega$, yet $f_0 \ne \bf{0}$.

\noindent We now prove (2) and (3) are equivalent. Let $f \in PW_k.$ Then $f=B_1^*x$ (where the columns of $B_1^*$ form an orthonormal set and $x \in \CC^p$). Observe that
\[
\langle x,D_k^mb_{1i}\rangle=\langle A^mf,e_i\rangle\\
=\langle  f,  \PPW A^me_i\rangle.
\]
Since $\|f\|^2 = \|\sum_{j = 1}^p x_jb^*_{1j}\|^2 = \sum_{j = 1}^p|x_j|^2 = $$ \|x\|^2$ and as it follows from the observation above that 

\[
\sum\limits_{i=1}^{r_{1i}-1} |\langle x,D_k^mb_{1i}\rangle|^2=\sum\limits_{i=1}^{r_{1i}-1} |\langle f,\PPW A^{m}e_i\rangle|^2
\]
we have  $\{ \PPW A^{m}e_i : m \in \{0, 1, \ldots r_{1i}-1 \}$, $i \in \Omega\}$ is a frame for $PW_k$ if and only if $\{ D_k^mb_{1i} : m \in \{0, 1, \ldots r_{1i}-1 \}$, $i \in \Omega\}$ is a frame for $\CC^p$. 
Applying Theorem \ref{TAD0}, the equivalence of (2) and (3) follows.




\smallskip \noindent We now show that (2) and (4) are equivalent. Let $V_j=P_j(\CC^d)=span \{ e_{1},\dots,e_{m}\}$ and $W_j=Q_j(\CC^d)=span \{ B^\star e_{1},\dots,B^\star e_{m}\}$, where $m= \dim V_j= \dim W_j$. 
Clearly, $W_j =B^\star V_j$. Assume that $\{P_jb_{i}: i \in \Omega\}$ is a frame for $V_j$. Then, for any $v\in V_j$, using the fact that $P_jv=v$, we have that 
\begin{displaymath}
\alpha \|v\|^2\le \sum\limits_{i \in \Omega} |\langle v, P_jBe_{i}\rangle|^2= \sum\limits_{i \in \Omega} |\langle v, Be_{i} \rangle|^2\le \beta \|v\|^2.
\end{displaymath}

\smallskip\noindent For $w \in W_j$, let $w=B^\star v$ for some $v \in V_j$.  Using the fact that $\|w\|=\|B^\star v\|=\|v\|$ and $Q_jw=w$, we get
\[
\begin{split}
\sum\limits_{i \in \Omega} |\langle w, Q_je_i\rangle|^2&= \sum\limits_{i \in \Omega} |\langle w, e_i\rangle|^2\\
&=\sum\limits_{i \in \Omega} |\langle B^\star v, e_i\rangle|^2=\sum\limits_{i \in \Omega} |\langle v, Be_i\rangle|^2\ge \alpha \|v\|^2= \alpha \|w\|^2.
\end{split}
\]
Similarly, we get an upper bound estimate 
\[
\sum\limits_{i \in \Omega} |\langle w, Q_je_i\rangle|^2\le \beta \|w\|^2. 
\]
This proves that (2) implies (4). Starting from the assumptions that $\{Q_je_i: i \in \Omega\}$ is a frame for $W_j$, the exact same steps as before prove that (4) implies (2).

\smallskip \noindent Recovery of $f \in PW_k$ from the space-time samples $\{f(i), Af(i), \dots, A^{l_{1i}}f(i): i \in \Omega\}$ is equivalent to the existence of a left inverse for $\mathcal{A}_{(W,\Omega)}$ since we can recover $f$ from the vector $\mathcal{A}_{(W,\Omega)} f$ if and only if there is a left inverse for $\mathcal{A}_{(W,\Omega)}$. Thus, the equivalence of (5) and (1) follows.
\end{proof}

\section{Optimal node sampling for space-time problem on graphs} \label {ONS}

\subsection{Problem setup}
Let $G$ be a finite graph, and let the time measurements be taken at discrete uniform time intervals. 
In this case, without loss of generality, assume that $t_n=n$. Then, for each fixed time $n \leq L$, the vector  $u_n:=(u(n,v))_{v \in V(G)}$ is given by 
\begin{equation} \label {DIVP}
 u_n=A^nu_0,
 \end{equation}
 where $A$ is a graph operator as defined above in Definition  \ref{IVPG}. 
\smallskip

\noindent
We can determine the mean squared error of reconstructing a vector $u_0 \in \CC^d$ via noisy measurements obtained from space-time samples by applying Proposition 4.1 in \cite{AHKLLV18}. In particular, letting $\eta_n$ be random vectors whose entries are i.i.d random variables of mean zero and standard deviation $\sigma$, the noisy space-time samples are recorded as: 
 \[
 \ y_\Omega = \mathcal{A}_\Omega u_0 + \eta
 \]
 where $\mathcal{A}_\Omega$ is as in \eqref{AOmatrix}, when $W=\CC^d$
 and $\eta = [\eta_0^T,...,\eta_L^T]^T$.
Denoting the error in the least squares solution as $\epsilon_\Omega$, we have that $\mathbb{E}(\|\epsilon_\Omega\|_2^{2}) = \sigma^2 tr(C_\Omega^{-1}) =\sigma^{2} \sum_{j=1}^d 1/ \mu_j (\Omega)$ where $\mu_j (\Omega)$ are the eigenvalues of the matrix $C_\Omega = \mathcal{A}_\Omega^* \mathcal{A}_\Omega$.

\medskip
\noindent
It is therefore natural to consider the problem of minimizing the mean squared error $\text{MSE}(\Omega) := \mathbb{E}(\|\epsilon_\Omega\|_2^{2})$ in reconstructing $u_0$ from $ y_\Omega$ across sets $\Omega$ of sampling locations with a given size $|\Omega|=s$. This is captured by the following optimization problem:
 \begin{equation}\label{trinv_problem}
\begin{aligned}
\min_\Omega &\quad \text{MSE}(\Omega) = \sigma^2 tr(C_\Omega^{-1}) \\
\text{s.t.} &\quad |\Omega| = s.
\end{aligned}
\end{equation}
\begin{remark} In this remark, we collect a few observations with regard to the above problem.
    \begin{enumerate}
        \item 
        The MSE$(\Omega)$ decreases as $L$, the number of time steps, increases, as shown in $\cite{AHKLLV18}$. Similarly, for any $v \in V = V(G)$, $\text{MSE}(\Omega \cup \{v\}) \leq  \text{MSE}(\Omega)$ .
        
        \item For the case in which $u_0 \in W = PW_{k}$ for some $k$ we may similarly construct $\mathcal{A}_{(W, \Omega)}$ as in \eqref{AOmatrix}
 noting that $\mathcal{A}_{(W, \Omega)} v = \mathcal{A}_\Omega v$ for all $v \in W$. 

 \item  When $\mathcal{A}_{(W,\Omega)}u_0$ has full rank (see Corollary 1), $u_0$ can be recovered from $\mathcal{A}_{(W,\Omega)}u_0$ as $u_0 = \mathcal{A}_{(W,\Omega)}^\dagger\mathcal{A}_{(W,\Omega)}u_0$. Let $\eta$ be as defined previously. We assume that the space-time measurements are perturbed by noise  $\eta$ as follows:
 \[
 y_{(W,\Omega)} = \mathcal{A}_{(W,\Omega)} u_0 + \eta.
 \]
 The error of reconstruction is now $\epsilon_{(W,\Omega)} = \mathcal{A}_{(\Omega,W)}^\dagger\eta$. In the following two propositions, we observe that $\mathbb{E}(\|\epsilon_{(W,\Omega)}\|^2) = \sigma^2\text{tr}(\mathcal{A}^*_{(W,\Omega)}\mathcal{A}_{(W,\Omega)})^{-1}$ and $\mathbb{E}(\|\epsilon_{(W,\Omega)}\|^2) \leq \mathbb{E}(\|\epsilon_{\Omega}\|^2) $. 
    \end{enumerate}
\end{remark}



 \begin{proposition}
     Let $W = PW_{\lambda_k}$ and $C_{(W,\Omega)} = \mathcal{A}^*_{(W,\Omega)}\mathcal{A}_{(W,\Omega)}$. With $\epsilon_{(W,\Omega)}$ defined as above, we have $\mathbb{E}(\|\epsilon_{(W,\Omega)}\|^2) = \sigma^2 \text{tr}(C_{(W,\Omega)}^{-1})$.
 \end{proposition}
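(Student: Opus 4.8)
The plan is to mimic the derivation behind Proposition 4.1 of \cite{AHKLLV18}, but carried out for the restricted matrix $\mathcal{A} := \mathcal{A}_{(W,\Omega)}$ rather than for $\mathcal{A}_\Omega$. Throughout, write $C := C_{(W,\Omega)} = \mathcal{A}^*\mathcal{A}$, and assume, as guaranteed by the full-rank hypothesis (Corollary 1), that $\mathcal{A}$ has full column rank, so that $C$ is invertible and the Moore--Penrose pseudoinverse is $\mathcal{A}^\dagger = C^{-1}\mathcal{A}^*$. The least-squares reconstruction then produces the error $\epsilon_{(W,\Omega)} = \mathcal{A}^\dagger \eta$, exactly as recorded in the preceding remark.

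First, I would expand the squared norm as a Hermitian quadratic form in the noise. Using $(\mathcal{A}^\dagger)^* = \mathcal{A}C^{-1}$ (the inverse of a Hermitian positive-definite matrix is Hermitian), we obtain
\[
\|\epsilon_{(W,\Omega)}\|^2 = \eta^* (\mathcal{A}^\dagger)^* \mathcal{A}^\dagger \eta = \eta^* \mathcal{A} C^{-2} \mathcal{A}^* \eta =: \eta^* M \eta,
\]
with $M := \mathcal{A} C^{-2}\mathcal{A}^*$. Next, I would take expectations using the standard identity for the expectation of a quadratic form: since the entries of $\eta$ are i.i.d. with mean zero and variance $\sigma^2$, its covariance matrix is $\sigma^2 I$ and $\mathbb{E}(\eta)=0$, so $\mathbb{E}(\eta^* M\eta) = \operatorname{tr}\!\big(M\,\mathbb{E}(\eta\eta^*)\big) = \sigma^2 \operatorname{tr}(M)$. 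Finally, invoking cyclicity of the trace together with $\mathcal{A}^*\mathcal{A} = C$,
\[
\operatorname{tr}(M) = \operatorname{tr}(\mathcal{A} C^{-2}\mathcal{A}^*) = \operatorname{tr}(C^{-2}\mathcal{A}^*\mathcal{A}) = \operatorname{tr}(C^{-2}C) = \operatorname{tr}(C^{-1}),
\]
which yields $\mathbb{E}(\|\epsilon_{(W,\Omega)}\|^2) = \sigma^2 \operatorname{tr}(C_{(W,\Omega)}^{-1})$, as claimed.

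The computation is essentially routine linear algebra, and the only points requiring care are bookkeeping ones. The first is the justification of $\mathcal{A}^\dagger = C^{-1}\mathcal{A}^*$, i.e.~that $\mathcal{A}_{(W,\Omega)}$ has full column rank on $W$; this is precisely where the recoverability hypothesis (the full-rank condition of Corollary 1) enters, ensuring $C$ is invertible. The second is the quadratic-form expectation identity: one must confirm that the i.i.d.~zero-mean, variance-$\sigma^2$ assumption gives $\mathbb{E}(\eta\eta^*) = \sigma^2 I$ and, in the complex-valued case, that the real and imaginary components are handled so that the same trace identity survives. I expect these to be the only potential obstacles, and both are mild.
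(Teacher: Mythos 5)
Your proof is correct and follows essentially the same route as the paper: the paper simply defers to Proposition 2.5 of the cited reference (the identity $\mathbb{E}(\|\Psi^\dagger\eta\|^2) = \sigma^2\operatorname{tr}(\Psi^*\Psi)^{-1}$ for full-rank $\Psi$), and your computation --- $\mathcal{A}^\dagger = C^{-1}\mathcal{A}^*$, the quadratic-form expectation, and cyclicity of the trace --- is precisely the standard proof of that identity, specialized to $\Psi = \mathcal{A}_{(W,\Omega)}$. The only difference is that you supply in full the argument the paper outsources to a citation, and you correctly flag the two hypotheses (full column rank via Corollary 1, and $\mathbb{E}(\eta\eta^*) = \sigma^2 I$) on which it rests.
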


\smallskip\noindent
 The proof follows immediately from the proof of Proposition 2.5 in \cite{GVT98}, which shows that $\mathbb{E}(\|\Psi^\dagger\eta\|^2) =  \sigma^2\text{tr}(\Psi^*\Psi)^{-1}$ whenever $\eta$ is as above and $\Psi$ is a full rank matrix. The conclusion to draw is that when $u_0 \in W = PW_{\lambda_k}$, we have $\text{MSE}(\Omega) = \sigma^2tr(C^{-1}_{(W,\Omega)})$ and $\eqref{trinv_problem}$ becomes 
\begin{equation}\label{pw_trinv_problem}
\begin{aligned}
\min_\Omega &\quad \text{MSE}(\Omega) = \sigma^2 tr(C^{-1}_{(W,\Omega)}) \\
\text{s.t.} &\quad |\Omega| = s.
\end{aligned}
\end{equation}

\begin{proposition}
 Suppose $W = PW_{\lambda_k}$ has dimension $m < d$, and let $C_\Omega$ and $C_{(W,\Omega)}$ be as above. Then $tr(C^{-1}_{(W,\Omega)} )< tr(C^{-1}_\Omega)$, and hence $\mathbb{E}(\|\epsilon_{(W,\Omega)}\|^2) < \mathbb{E}(\|\epsilon_\Omega\|^2)$
\end{proposition}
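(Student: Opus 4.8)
The plan is to reduce the claim to the standard comparison between the inverse of a compression and the compression of the inverse of a positive-definite matrix. First I would record the algebraic relationship between the two Gram matrices. From the block structure in \eqref{AOmatrix} and the self-adjointness of $A$ (Definition \ref{IVPG} gives $A^*=A$), the blocks of $\mathcal{A}_{(W,\Omega)}$ are $S_\Omega A^j I_W$ while those of $\mathcal{A}_\Omega$ are $S_\Omega A^j$; hence $\mathcal{A}_{(W,\Omega)} = \mathcal{A}_\Omega I_W$ (this is the content of Remark 1(2)). Taking Gram matrices gives
\[
C_{(W,\Omega)} = \mathcal{A}^*_{(W,\Omega)}\mathcal{A}_{(W,\Omega)} = I_W^* \mathcal{A}_\Omega^* \mathcal{A}_\Omega I_W = I_W^* C_\Omega I_W,
\]
so $C_{(W,\Omega)}$ is exactly the compression of $C_\Omega$ to the $m$-dimensional subspace $W$ along the isometry $I_W$ (whose columns are orthonormal eigenvectors of $A$, so $I_W^*I_W = I_m$). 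Since recovery is assumed, $C_\Omega$ is positive definite, and therefore so is its compression $C_{(W,\Omega)}$.

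The key step is the Loewner-order inequality $\bigl(I_W^* C_\Omega I_W\bigr)^{-1} \preceq I_W^* C_\Omega^{-1} I_W$ (here $M\preceq N$ means $N-M$ is positive semidefinite): the inverse of the compression is dominated by the compression of the inverse. I would prove this by completing $I_W$ to a unitary $U = [\,I_W \mid U_2\,]$ of $\CC^d$ and applying the Schur complement formula to $U^* C_\Omega U$. Since $U$ is unitary, $(U^* C_\Omega U)^{-1} = U^* C_\Omega^{-1} U$, whose $(1,1)$ block is $I_W^* C_\Omega^{-1} I_W$; by the Schur complement formula this block equals the inverse of $(I_W^* C_\Omega I_W) - (I_W^* C_\Omega U_2)(U_2^* C_\Omega U_2)^{-1}(U_2^* C_\Omega I_W)$, which is $\preceq I_W^* C_\Omega I_W$. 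Inverting reverses the order and yields the inequality. Taking traces (which preserves $\preceq$) gives $\operatorname{tr}(C_{(W,\Omega)}^{-1}) \le \operatorname{tr}(I_W^* C_\Omega^{-1} I_W)$.

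Finally I would extract the strict inequality. Writing $I_W = [u_1,\dots,u_m]$ and extending to an orthonormal basis $u_1,\dots,u_d$ of $\CC^d$,
\[
\operatorname{tr}(I_W^* C_\Omega^{-1} I_W) = \sum_{i=1}^m \langle u_i, C_\Omega^{-1} u_i\rangle < \sum_{i=1}^d \langle u_i, C_\Omega^{-1} u_i\rangle = \operatorname{tr}(C_\Omega^{-1}),
\]
the inequality being strict because $m < d$ and each omitted term $\langle u_i, C_\Omega^{-1} u_i\rangle$ is positive ($C_\Omega^{-1}$ being positive definite). Chaining the two displays gives $\operatorname{tr}(C_{(W,\Omega)}^{-1}) < \operatorname{tr}(C_\Omega^{-1})$, and multiplying by $\sigma^2$, together with the preceding Proposition ($\mathbb{E}\|\epsilon_{(W,\Omega)}\|^2 = \sigma^2\operatorname{tr}(C_{(W,\Omega)}^{-1})$) and the analogous identity for $\epsilon_\Omega$, yields $\mathbb{E}(\|\epsilon_{(W,\Omega)}\|^2) < \mathbb{E}(\|\epsilon_\Omega\|^2)$.

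The main obstacle is the operator inequality between the inverse of the compression and the compression of the inverse; once that standard fact is established the remainder is bookkeeping. A subtlety worth stating explicitly is the standing assumption that $C_\Omega$ (equivalently $\mathcal{A}_\Omega$) has full rank, which guarantees that all the inverses exist and that every diagonal term of $C_\Omega^{-1}$ in an orthonormal basis is strictly positive, thereby securing the strict rather than merely weak inequality.
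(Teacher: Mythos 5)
Your proof is correct, and it reaches the conclusion by a genuinely different route than the paper. Both arguments begin from the same compression identity $C_{(W,\Omega)} = I_W^* C_\Omega I_W$, but the paper then invokes the Poincar\'e Separation Theorem: if $0<\rho_1\le\cdots\le\rho_m$ are the eigenvalues of the compression and $0<\mu_1\le\cdots\le\mu_d$ those of $C_\Omega$, then $\rho_i\ge\mu_i$, so $\sum_{i=1}^m 1/\rho_i < \sum_{i=1}^d 1/\mu_i$, the strictness coming from the $d-m$ omitted positive terms. You instead prove the Loewner inequality $\bigl(I_W^* C_\Omega I_W\bigr)^{-1}\preceq I_W^* C_\Omega^{-1} I_W$ via the Schur complement of $U^*C_\Omega U$, take traces, and then get strictness from the positivity of the $d-m$ omitted diagonal entries of $C_\Omega^{-1}$ in the extended orthonormal basis. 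The trade-off: the paper's citation of eigenvalue interlacing is a one-line appeal to a named theorem and yields the finer termwise bound $1/\rho_i\le 1/\mu_i$; your argument is fully self-contained (the Schur complement computation is elementary) and produces the operator-level inequality between the inverse of the compression and the compression of the inverse, which is of independent interest. Your explicit flagging of the full-rank assumption on $C_\Omega$ is welcome --- the paper relies on it implicitly when it lists the eigenvalues as strictly positive.
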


\begin{proof}
    Let $I_W: W \hookrightarrow \ell^2(V)$ be the natural inclusion as in Definition \ref{AOmatrix}. We can explicitly write
    \begin{equation}\label{frame_op_pw}
    \begin{split}
    C_{(W,\Omega)} &= \sum_{n = 0}^L I_W^*A^n S_\Omega A^n I_W \\
    &= I_W^*(\sum_{n = 0}^L A^n S_\Omega A^n) I_W \\
    &= I_W^* C_\Omega I_W.
    \end{split}
    \end{equation}
    Thus, if $0 < \rho_1 \leq...\leq \rho_m$ are the eigenvalues of $C_{(W,\Omega)} = I_W^*C_\Omega I_W$, and $0 < \mu_1 \leq...\leq \mu_d$ are the eigenvalues of $C_\Omega$, then the Poincare Separation Theorem (see \cite{MN19}) guarantees that $\rho_i \geq \mu_i$, hence $1/\rho_i \leq 1/\mu_i$. Therefore 
    \[
    tr(C_{(W,\Omega)}^{-1}) = \sum_{i = 1}^m \frac{1}{\rho_i} < \sum_{i = 1}^d \frac{1}{\mu_i} = tr(C_{\Omega}^{-1})
    \]
    as claimed.
\end{proof}

\smallskip
\noindent
We end this section with an explicit matrix form for $C_\Omega$ and $C_{(W,\Omega)}$. Let $A = BDB^*$ as before. Then $C_\Omega$ is computed as:
\begin{align*}
    C_\Omega &= \sum_{n = 0}^{L-1} A^n S_\Omega A^n \\
             &= \sum_{n = 0}^{L-1} BD^nB^*S_\Omega B D^n B^* \\
             &= U\bigg( \sum_{n=0}^{L-1} \sum_{v \in \Omega}D ^n B^*S_{\{v\}}B D^n\bigg)B^*.
\end{align*}
Writing $D = diag(\lambda_1,...,\lambda_d)$, we have that the $ij^{th}$ entry of $D^nU^*S_{\{v\}}UD^n$ is
\[
(\lambda_i\lambda_j)^nb_{vi}\overline{b_{vj}}
\]
Thus, the $ij^{th}$ entry of $B^*C_\Omega B$ is
\begin{equation}\label{BCB_entries}
\begin{cases}
\frac{1-(\lambda_i\lambda_j)^L}{1-\lambda_i\lambda_j}\sum_{v\in \Omega}b_{vi}\overline{b_{vj}},& \lambda_i\lambda_j \neq 1; \\
L\sum_{v\in \Omega} b_{vi}\overline{b_{vj}}, & \lambda_i\lambda_j = 1.
\end{cases}
\end{equation}
Note that $B^*C_\Omega B$ is the matrix form of $C_\Omega$ with respect to the basis of eigenvectors of $A$. If $W = PW_{\lambda_k}$ has dimension $m$, then (\ref{frame_op_pw}) shows that $C_{(W,\Omega)} = C_\Omega|_{W}$. Therefore, the matrix form of $C_{(W,\Omega)}$ with respect to the basis of eigenvectors spanning $W$ is the principal $m\times m$ submatrix of $B^*C_\Omega B$. 

\smallskip \noindent
In practice, we are often concerned with signals which decay in magnitude over time. This means the norm of the graph operator $A$ has norm less than 1. To model such evolution processes, we will assume that $\sigma(A) \subset (0,1)$ in all numerical experiments in the next section. Furthermore, it was shown in \cite{AHKLLV18} that, in such a scenario, MSE$(\Omega)$ converges for all subsets $\Omega \subset V$ as $L \to \infty$. For all numerical considerations, we will assume to take infinite samples in time, in which case the formula for $B^*C_\Omega B$ becomes: 
\begin{equation}\label{BCB_entries2}
\frac{1}{1-\lambda_i\lambda_j}\sum_{v\in \Omega}b_{vi}\overline{b_{vj}}
\end{equation}
\subsection{Numerical Algorithms} \label {NA}

The optimization problem (6) may not be feasible to solve using brute-force methods when the number of vertices in the graph is $20$ or more. In this section, we propose several numerical algorithms for obtaining approximate solutions to \eqref{pw_trinv_problem}. We recast the optimization problem $\eqref{pw_trinv_problem}$ as a binary optimization problem as follows.

\smallskip \noindent
Recall that $W = PW_{k}$ and note that 
\[
C_{(W,\Omega)} = \sum_{v \in \Omega} C_{(W,\{v\})}.
\]
Given a vector $x = (x_v)_{v\in V} \in \ell^1(V)$, we use the notation 
\begin{equation}\label{eqCW}
   C_W(x) = \sum_{v \in V} x_vC_{(W,\{v\})}. 
\end{equation}
With this notation, problem \eqref{pw_trinv_problem} becomes equivalent to 
\begin{equation}\label{bin_trinv_problem}
\begin{aligned}
\min_{x\in \ell^{1}(V)} &\quad tr(C_W(x)^{-1}) \\
\text{s.t.} &\quad \|x\|_1 = s, \quad x_v \in \{0,1\},\quad v\in V.
\end{aligned}
\end{equation}
In the following, we propose several numerical algorithms for solving the above problem \eqref{pw_trinv_problem}. We begin with a fast greedy worst-out best-in algorithm, whose performance depends heavily on its initialization. We then explore several efficient algorithms based on relaxing the binary constraints. It was first proposed in \cite{BJ09} to simply relax the binary constraints to convex constraints, and round the output to the nearest binary vector. That is, to solve the optimization problem

\begin{equation}\label{relaxed_trinv_problem}
\begin{aligned}
\min_{x\in \ell^{1}(V)} &\quad tr(C_W(x)^{-1}) \\
\text{s.t.} &\quad \|x\|_1 = s, \quad x_v \in [0,1],\quad v\in V.
\end{aligned}
\end{equation}

\smallskip \noindent
We will show heuristically that this method does not effectively minimize the objective in the case of dynamical sampling on graphs. For this reason, we will propose modifications to this method which penalize the objective function to promote a binary solution. Heuristically, we show that these algorithms produce a good approximation of the solution. Using this approximation as an initialization for the greedy algorithm typically produces the optimal solution.

\subsubsection{Greedy Algorithm} \label {GA}
We begin with the algorithm based on the greedy selection of the graph nodes. In the pseudo-code given below, we use $e_v \in \ell^1(V)$ to denote the $v$-th standard basis vector. Starting with a randomly selected binary vector $x_0$, which describes the initial selection of $s$ nodes, the algorithm computes which node will lead to the smallest increase of the objective function when it is thrown out from the selection. In the next step, the algorithms finds a node to add back to the selection which will decrease the objective function as much as possible. In case of multiple minima in either of the two steps the algorithm will pick the smallest numbered node. The out-in procedure is repeated as long as the objective function improves.

\begin{tabular}{rp{8cm}}
\toprule
\multicolumn{2}{p{13cm}}{\textbf{Algorithm 2.} Pseudo-code for Worst-Out Best-In Greedy Algorithm.}\\
\midrule
1:& \textbf{Input:} Initial binary vector $x_0$ with $\|x_0\|_1 = s$. \\
2:& $x=x_0$, $y = \textbf{0}$\\
3:&\textbf{while} $\|x-y\|_1 > 0$\\
4:& \quad $y = x$\\
5:& \quad \textbf{set} \[v_\text{out} := \argmin_{\{v \in V \mid x_v = 1\}}\; \text{tr}\;C_{W}(x - e_v)^{-1}\] \\
6:& \quad $x_{v_\text{out}} = 0$ \\
7:& \quad \textbf{set} \[v_\text{in} := \argmin_{\{v \in V \mid x_v = 0\}}\; \text{tr}\;C_{W}(x + e_v)^{-1}\] \\
8:& \quad $x_{v_{in}} = 1$ \\
9:& \textbf{Output:} New binary vector $x$ with $\|x\|_1 = s$.\\ 
\bottomrule
\end{tabular}

\smallskip 
\noindent
Unsurprisingly, the accuracy of the approximation to the solution of \eqref{bin_trinv_problem} provided by the greedy method depends heavily on the initialization procedure. In what follows, we present algorithms that provide an initialization that significantly outperforms random selection. In fact, in many cases, the vector obtained from this initialization will itself be a solution of \eqref{bin_trinv_problem}. 

\subsubsection{Norm Penalty} 
The algorithm of this section is based on the classical idea of relaxing the discrete feasible set of the binary program \eqref{bin_trinv_problem} to its convex hull and adding a penalty term that encourages binary solutions.

\smallskip 
\noindent
Let $\lambda \ge 0$ be a parameter chosen by the user.
The problem \eqref{bin_trinv_problem}
is equivalent to
\begin{equation}\label{add_pen_problem}
\begin{aligned}
\min_{x \in \ell^1(V)} &\quad {\mathrm{tr} } (C_W(x)^{-1}) +\lambda(\|x\|_1 - \|x\|_2^2) \\
\text{s.t.} &\quad \|x\|_1 = s, \;\; x_v \in \{0, 1\}, \;\; v\in V(G),
\end{aligned}
\end{equation}
since $\|x\|_1 =  \|x\|_2^2$ for binary vectors $x \in \ell^1(V)$. Relaxing the binary constraints yields
\begin{equation}\label{DC_problem}
\begin{aligned}
\min_{x \in \ell^1(V)} &\quad G_\lambda(x) = {\mathrm{tr} } (C_W(x)^{-1}) +\lambda(\|x\|_1 - \|x\|_2^2) \\
\text{s.t.} &\quad \|x\|_1 = s, \;\; x_v \in [0, 1], \;\; v\in V(G),
\end{aligned}
\end{equation}
which is an example of a DC-program (``DC'' stands for a difference of convex). DC-programming is a well-developed area of non-convex optimization \cite{LtPD18}. Such programs are usually solved via a DC algorithm (DCA).
Given that the function $G_\lambda$ is infinitely differentiable (on its domain), we chose to use the concave-convex procedure (CCCP \cite{YR03}) as the version of DCA we apply. Specifically, this iterative process updates as:
\[
x_{n+1} = \argmin [{\mathrm{tr} } (C_W(x)^{-1}) + 
\lambda(s-2\langle x, x_n\rangle)] = \argmin [{\mathrm{tr} } (C_W(x)^{-1}) - 2\lambda\langle x, x_n\rangle].
\]
Note that this process will, generally, not converge to a binary solution. For this reason, we round the output to the closest binary vector which satisfies the constraints.
\begin{remark}\label{SY-remark}
    We could also use a multiplicative penalty, which does not depend on a choice of parameter:
    \begin{equation}
\begin{aligned}
\min_{x \in \ell^1(V)} &\quad G_1(x) = {\mathrm{tr} } (C_W(x)^{-1}) \cdot\frac{\|x\|_1}{\|x\|_2^2} \\
\emph{s.t.} &\quad \|x\|_1 = s, \;\; x_v \in [0, 1], \;\; v\in V(G).
\end{aligned}
\end{equation}
Taking the reciprocal of the objective, we have the equivalent optimization problem:
\begin{equation}\label{SY_alg}
\begin{aligned}
\max_{x \in \ell^1(V)} &\quad G_2(x) = \frac{\|x\|_2^2}{ \|x\|_1 \cdot{\mathrm{tr} } (C_W(x)^{-1})} \\
\emph{s.t.} &\quad \|x\|_1 = s, \;\; x_v \in [0, 1], \;\; v\in V(G).
\end{aligned}
\end{equation}
This is an example of a convex-convex quadratic fractional program. Using the methods discussed in \emph{\cite{KM23}}, we can suboptimally solve \eqref{SY_alg} via the iterative process:
\[
x_{n+1}  = \argmin [{\mathrm{tr} } (C_W(x)^{-1}) - 2\lambda_n\langle x, x_n\rangle],
\]
where $\lambda_n = \frac{F(x_n)}{\|x_n\|_2^2}$, $F(x) = {\mathrm{tr} } (C(x)^{-1}).$ We note that the implementation of this algorithm is essentially the same as the method used to solve the DC-program \eqref{DC_problem}.
\end{remark}

\begin{remark}
    Another interesting penalty uses the same objective function proposed in Remark \ref{SY-remark}, but constrains the 2-norm of the input vector instead of the 1-norm. More precisely: 
    \begin{equation}\label{iki_problem}
\begin{aligned}
\min_{x \in \ell^2(V)} &\quad H_2(x) = {\mathrm{tr} } (C_W(x)^{-1}) \cdot\frac{\|x\|_1}{\|x\|_2^2} \\
\text{s.t.} &\quad \|x\|_2^2 = s, \;\; x_v \in [0, 1], \;\; v\in V(G).
\end{aligned}
\end{equation}
To obtain a feasible solution of \eqref{bin_trinv_problem}, one may simply round the $s$ largest coordinates of the output to 1 and set the rest to 0. The output of this algorithm is usually sparse, and if it has less than $s$ nonzero coordinates, one may use greedy selection \emph{\cite{CPR17}} to select for the remaining vertices. We have observed this to be an efficient algorithm when the objective is minimized using the standard SciPy optimizer. However, this objective function is nonconvex, so it is unclear why this method is sensible. For this reason, we have chosen to omit it from the numerical experiments.
\end{remark}
\subsubsection{Exponential Penalty}
Another penalty method we propose is the use of exponential penalty terms. Specifically, we choose a small number $\delta > 0$ and solve 
\begin{equation}\label{EP_problem}
\begin{aligned}
\min_{x \in \ell^1(V)} &\quad G_\delta(x) = {\mathrm{tr} } (C_W(x)^{-1}) +\sum_{v \in V(G)} e^{-x_v/\delta} \\
\text{s.t.} &\quad \|x\|_1 = s, \;\; x_v \in [0, 1], \;\; v\in V(G).
\end{aligned}
\end{equation}
This method has the advantage of being a single convex programming problem instead of an iterative procedure. Typically, this method does not produce a sparse solution. Nevertheless, the rounded solution almost always outperforms the unpenalized convex relaxation.

\subsection {Numerical experiments} \label {NP}
In this section we conduct a series of numerical tests which demonstrate the efficiency of the algorithms proposed in the previous sections. In the first test, we demonstrate that the performance of the algorithms, relative to one another, is stable under different distributions of eigenvalues of the graph operator $A$, and different types of graphs. This will justify the choice to use uniform random eigenvalues and random connected graphs for the remainder of the experiments. The next tests demonstrate the accuracy of each algorithm, comparing them to the convex relaxation proposed in \cite{BJ09} and greedily minimizing the trace inverse \cite{CPR17}. We also test the accuracy of the Worst-Out Best-In (Wo-Bi) greedy algorithm initialized with the output of each algorithm. 

\subsubsection{Consistency across various graph types and eigenvalue distributions}
To show the consistency of the performance of the proposed algorithms under the type of graph and distribution of eigenvalues, we conduct a numerical test as follows. We choose four probability distributions: uniform distribution $\text{U}(0,1)$, triangular distribution $\text{Triangular}(0,\frac{1}{2},1)$, beta distribution $\text{Beta}(\frac{1}{2},\frac{1}{2})$, and exponential distribution $\text{Exp}(1)$. In each test, the eigenvalues of the graph operator will be drawn according to these distributions. The eigenvectors of the graph operator will correspond to the eigenvectors of the Laplacian of a random graph, a clustered graph, a path graph, and a graph with a circulant adjacency matrix. All graphs in this test have 45 nodes, the number of nodes to select is fixed to $12$, and the dimension of the Paley-Wiener space is fixed to $10$. By a clustered graph, we mean two complete graphs, one of size 38 and another of size 7, each with two random edges removed, connected by two edges (see Figure \ref{fig:clustered}). 

\begin{figure}[H]
    \centering
    \includegraphics[width=0.5\linewidth]{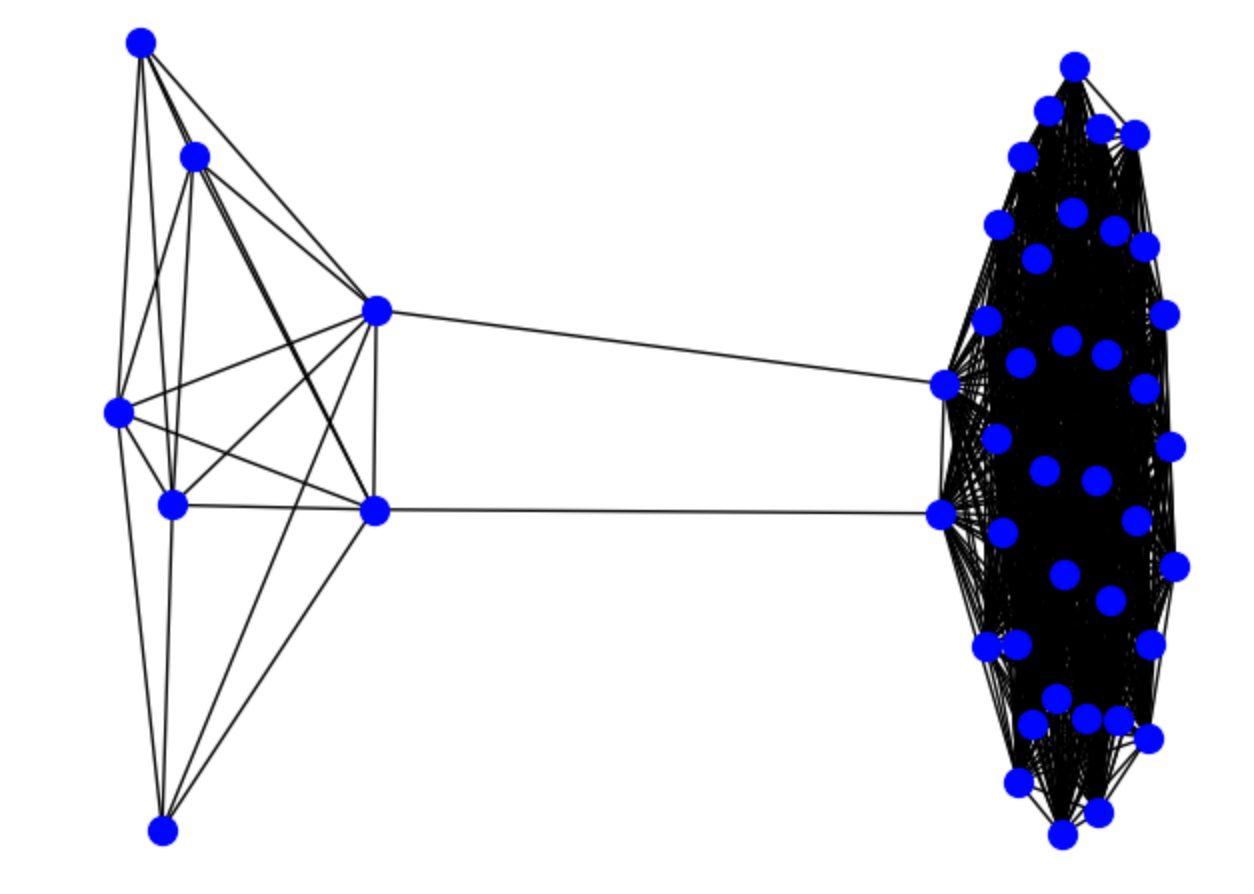}
    \caption{Clustered Graph}
    \label{fig:clustered}
\end{figure}

\begin{figure}[H]   
    \centering
    {\includegraphics[width=0.45\linewidth]{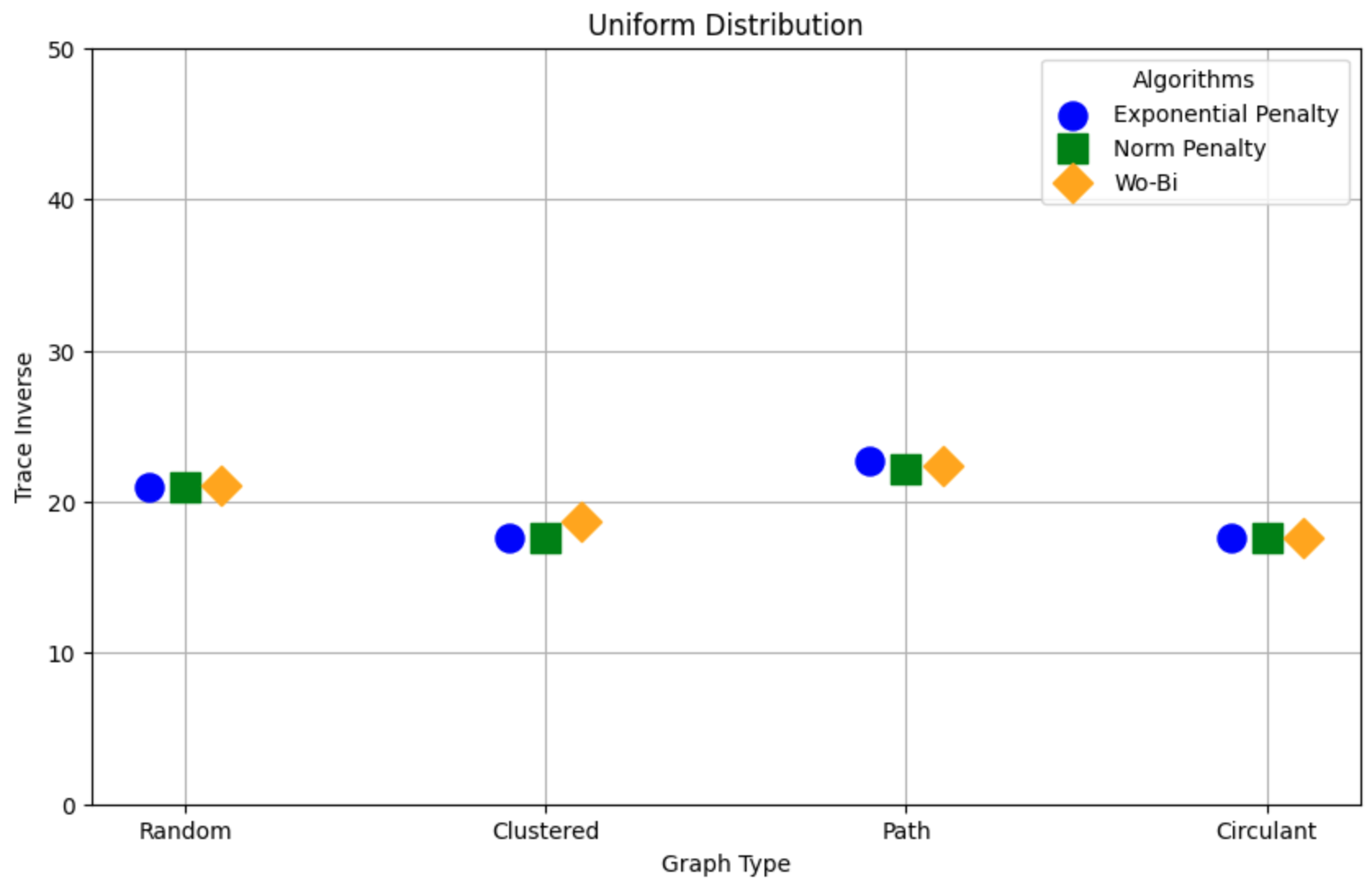}}
    \qquad
    {\includegraphics[width=0.45\linewidth]{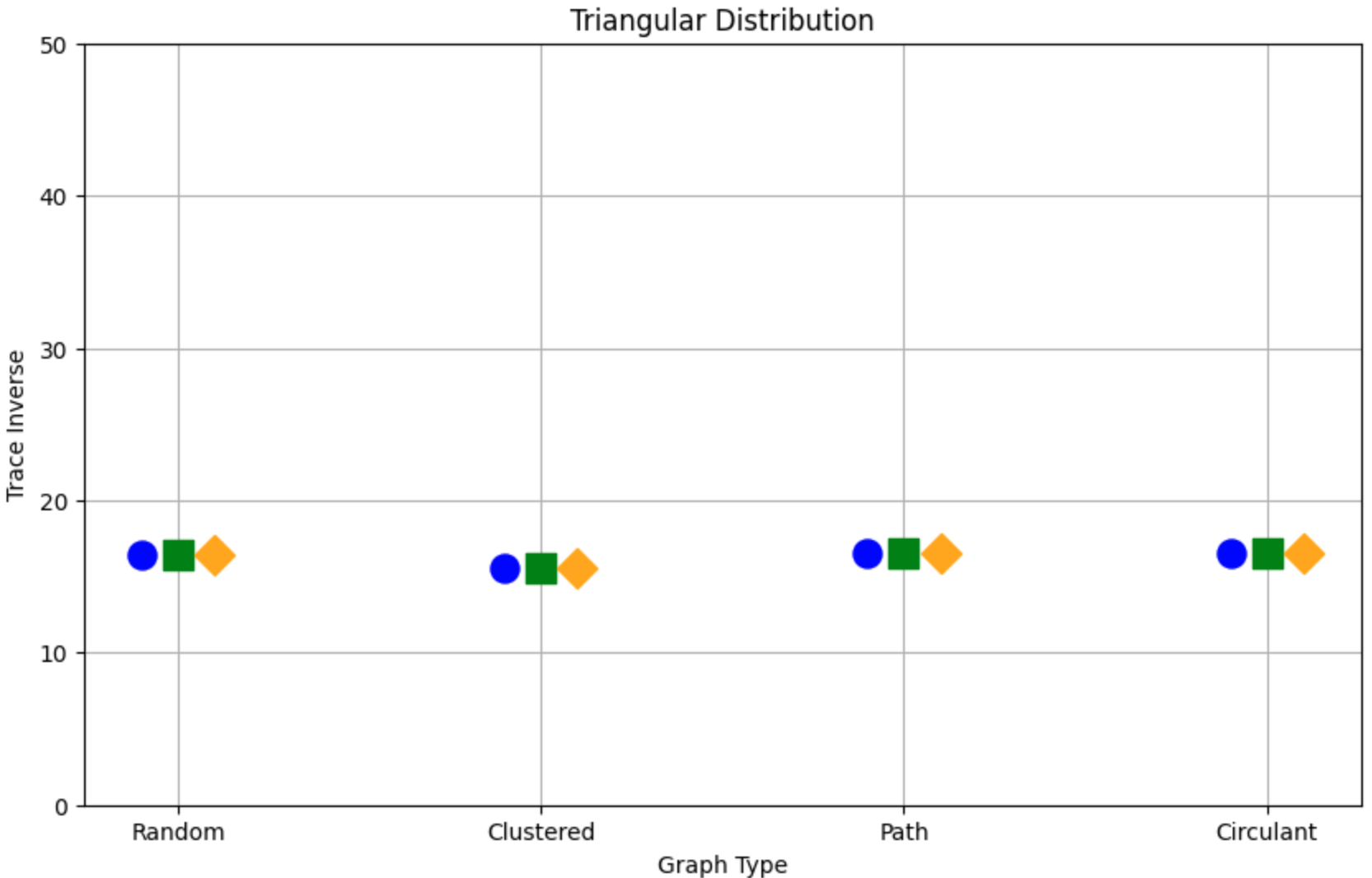}}

    {\includegraphics[width=0.45\linewidth]{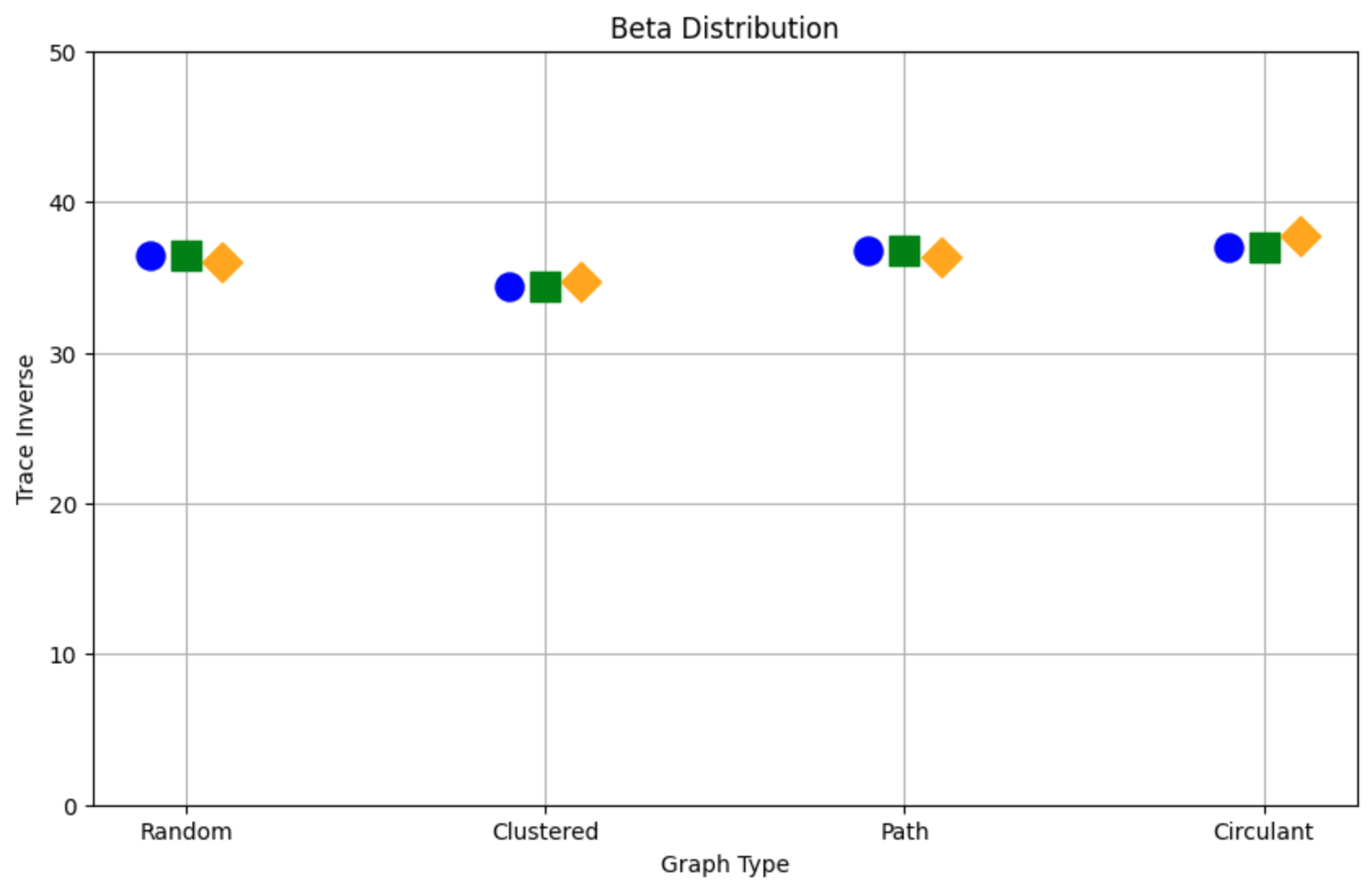}}
    \qquad
    {\includegraphics[width=0.45\linewidth]{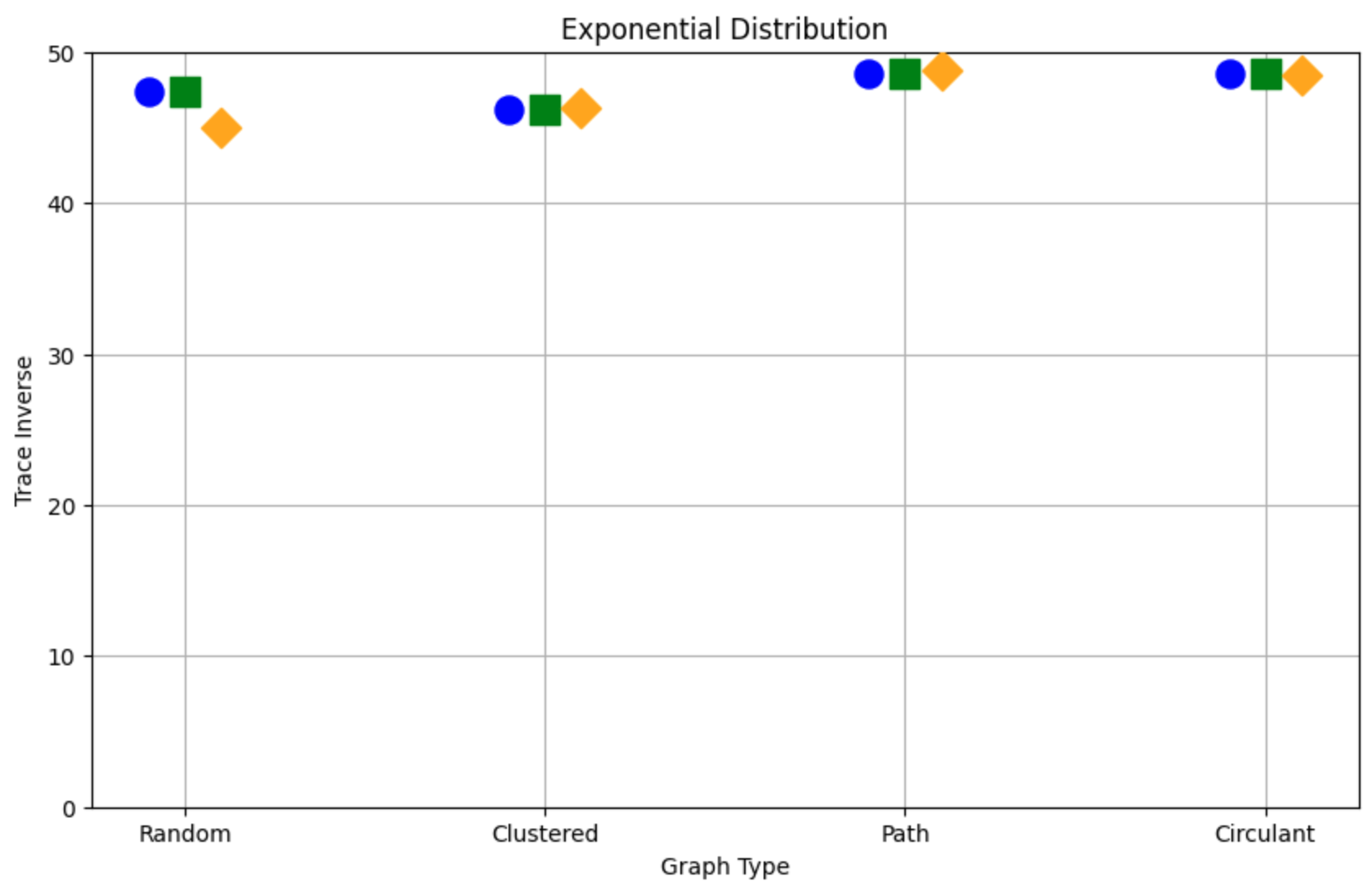}}
    \caption{Four figures showing the output of each algorithm varying the type of graph and distribution of eigenvalues of graph operator. }
    \label{fig:distribution_test}
\end{figure}

\smallskip
\noindent
Figure \ref{fig:distribution_test} above shows the output of the exponential penalty algorithm, norm penalty algorithm, and the Wo-Bi greedy algorithm. As the figure demonstrates, each of the algorithms is performing consistently relative to the others despite changing the distribution of the eigenvalues of the graph operator and the type of graph. For this reason, we will use only random connected graphs and uniform random eigenvalues for the graph operator in all subsequent tests. 

\subsubsection{Comparison to the brute force algorithm for small graphs}

Below we demonstrate the accuracy of the proposed algorithms against the brute-force algorithm (i.e. one that simply checks all feasible binary points of \eqref{bin_trinv_problem} and selects the optimal). For this test, we again use random graphs of sizes 15 and 20, and compare the accuracy of each algorithm and the time taken to run. 

\begin{figure}[H]
    \centering
    {\includegraphics[width=0.45\linewidth]{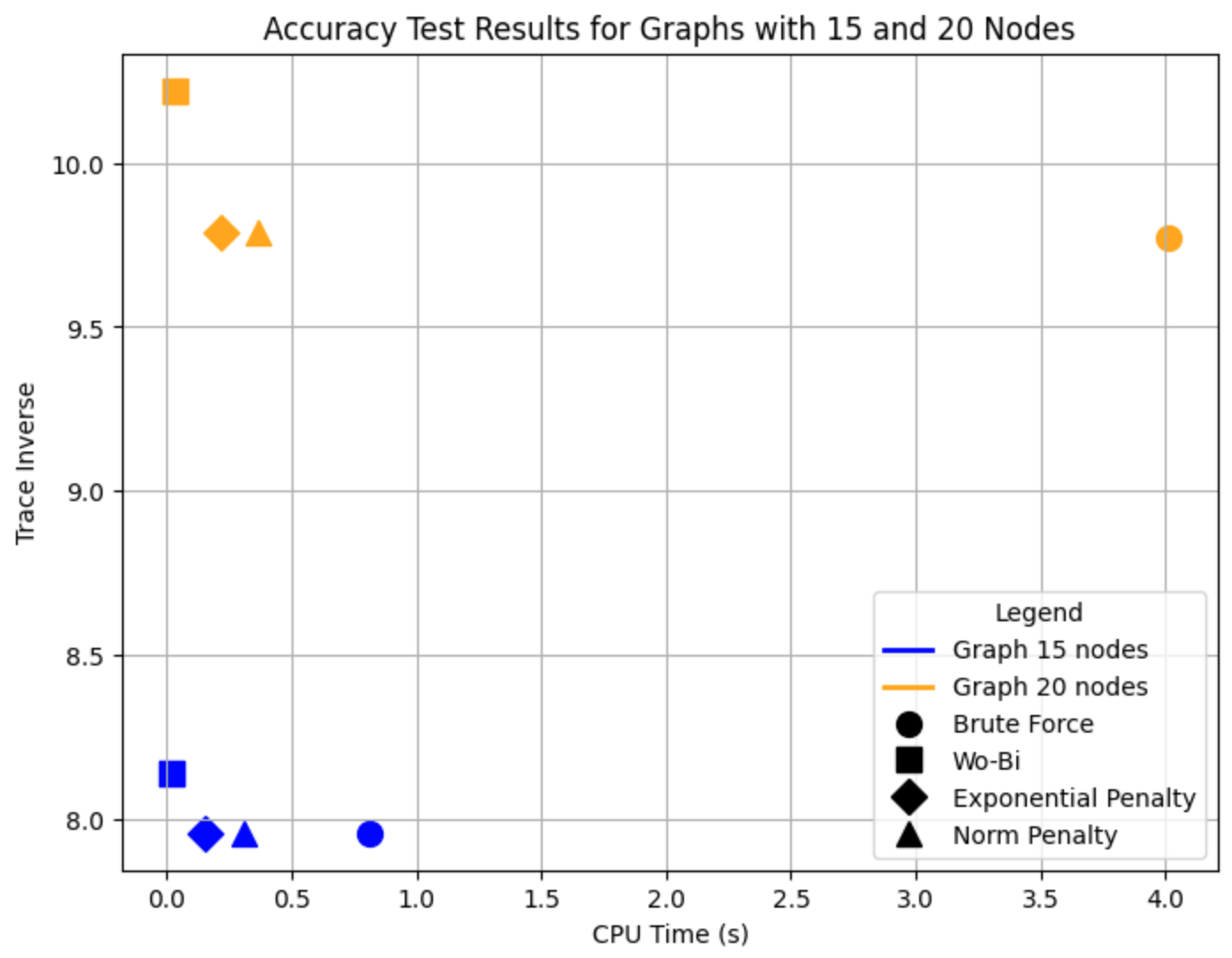}}
    \qquad
    {\includegraphics[width=0.45\linewidth]{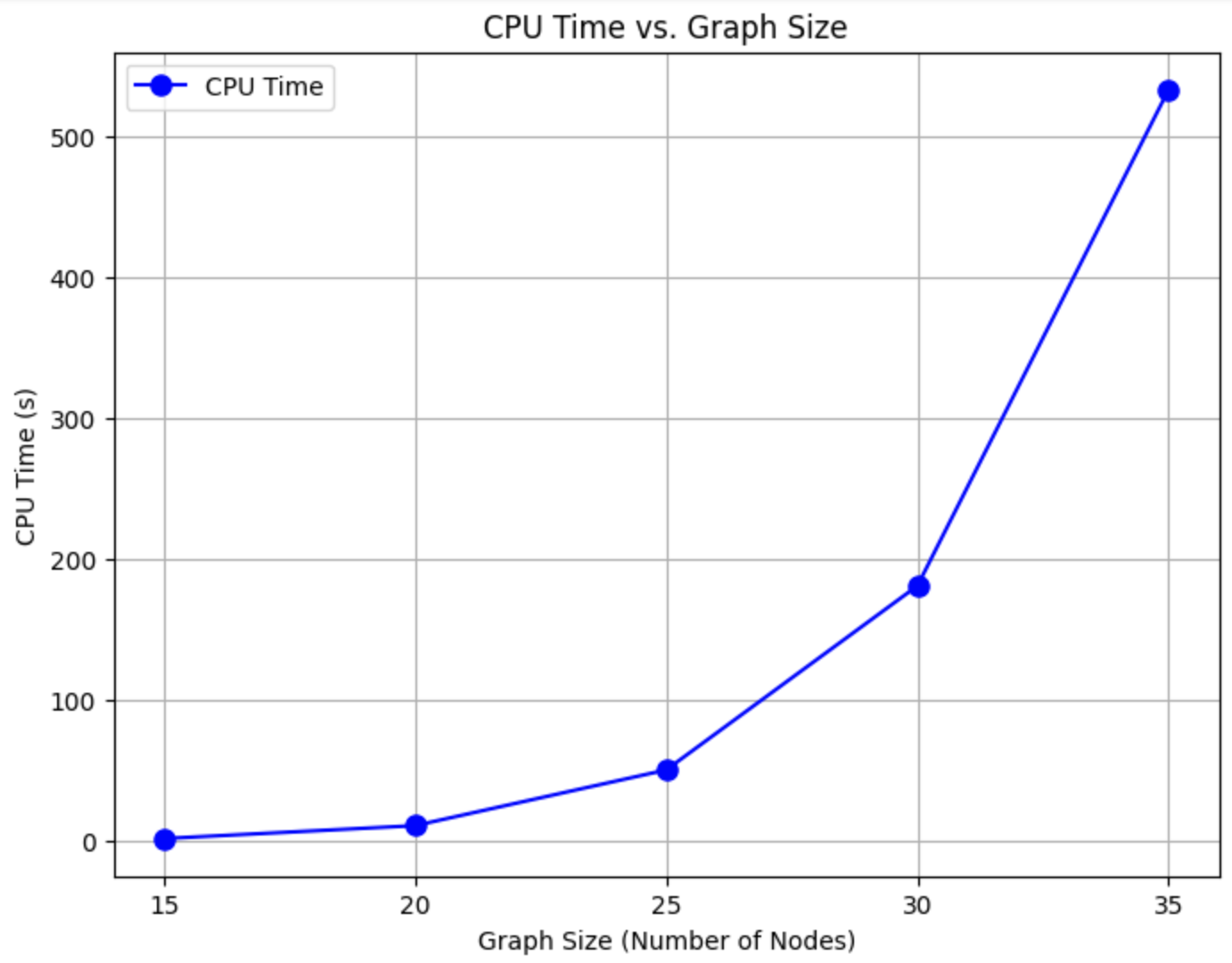}}
    \caption{Median Trace Inverse and CPU time of each algorithm (left) and CPU time for brute force algorithm (right). }
    \label{brute_force}
\end{figure}

\noindent For the figure above, we use random graphs with 15 and 20 nodes each having 45 edges and 50 edges respectively. The dimension of the Paley-Wiener space is taken to be 5, and the number of nodes to select is fixed to be 6. We notice that the exponential penalty and norm penalty perform nearly identically to the brute force algorithm in terms of accuracy, and are far faster. The Wo-Bi algorithm with random initialization performs slightly worse, but is notably faster than all the others. On the right, we see how the brute force algorithm quickly becomes infeasible to perform even for graphs of modest sizes. For example, this algorithm takes over 3 minutes to perform on a graph of size 30. In the next section we will see that all other algorithms take less than one second for graphs of this size.

\subsubsection{Relative accuracy comparison between algorithms}
To test the accuracy of each algorithm, we choose graph operators having uniform random eigenvalues and eigenvectors corresponding to Laplacians of random connected graphs with vertex sets of sizes 15, 30, 45, 60, 75, and 100, and the size of a subset of nodes to be respectively 6, 8, and 12 for larger graph sizes. The dimension of the Paley-Wiener space is always fixed to be 2 less than the subset size (i.e.~4, 6, and 10). All data points shown are averages of 100 tests. For some of the algorithms, some data points are missing. This is because we omit any data points corresponding to an average trace inverse larger than 100. 

\begin{figure}[H]
    \centering
    {\includegraphics[width=0.45\linewidth]{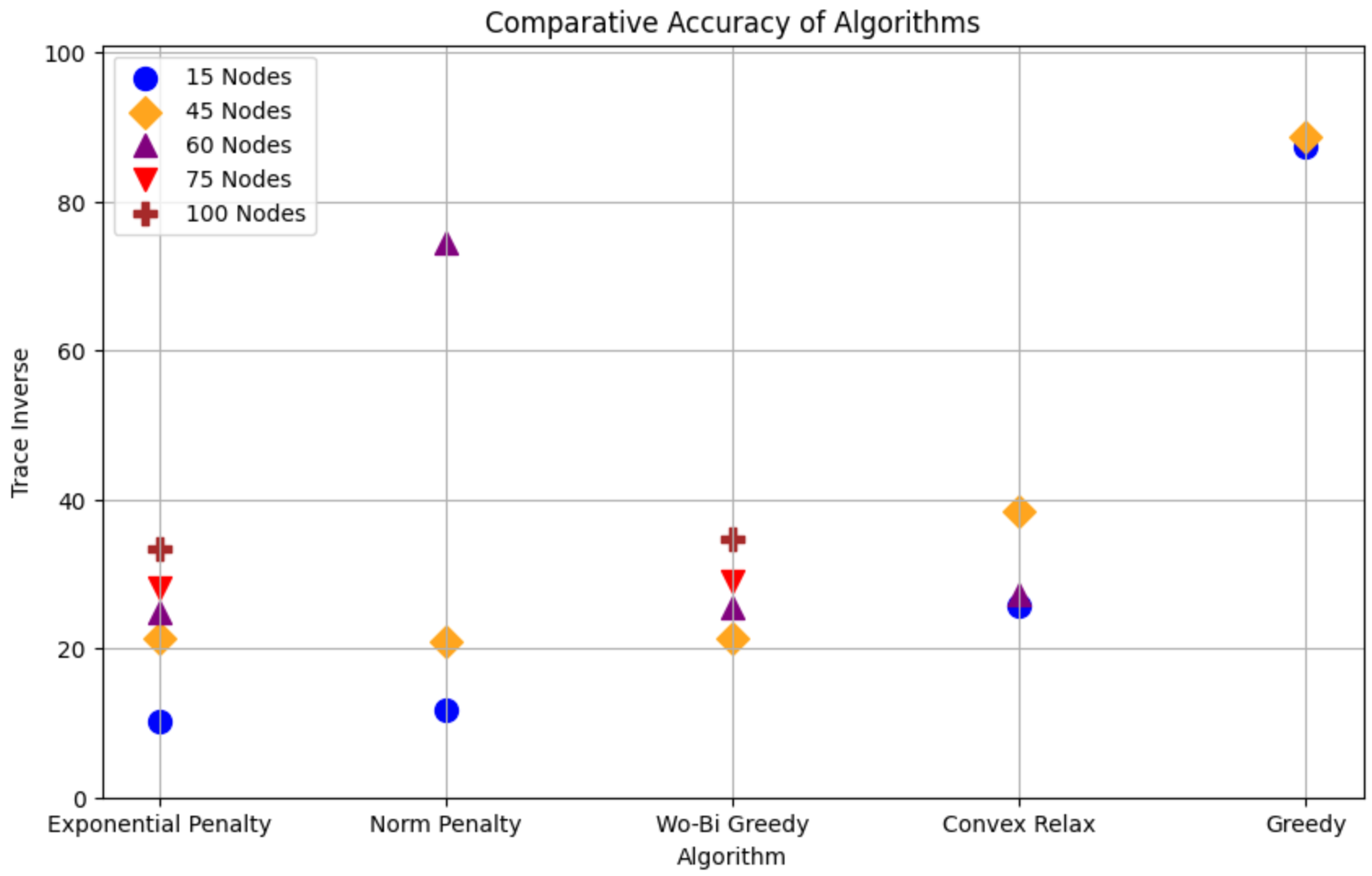}}
    \qquad
    {\includegraphics[width=0.45\linewidth]{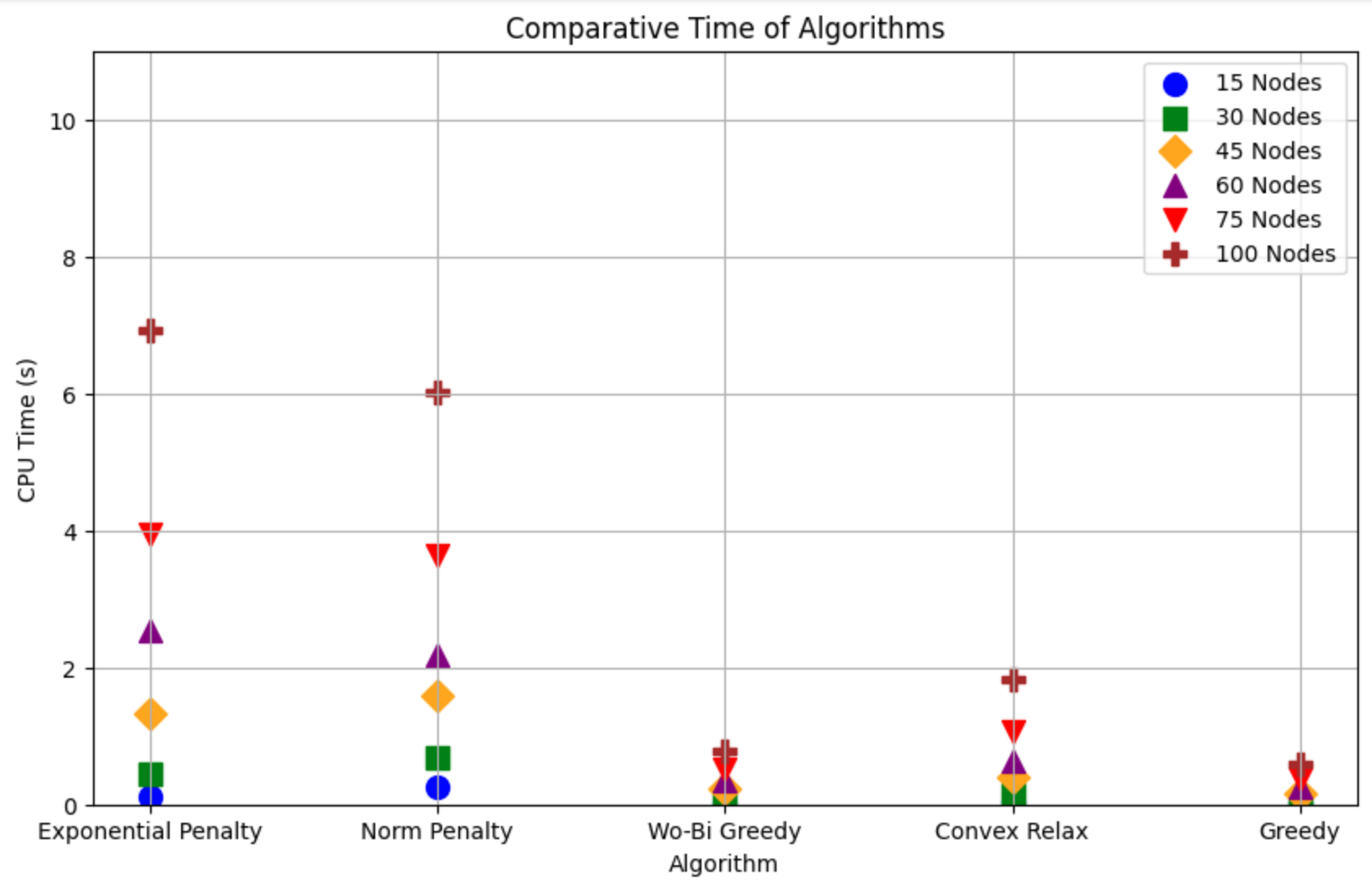}}
    \caption{Average trace inverse of the output of each algorithm (left) and the comparative time taken (right) run on random graphs. }
    \label{time_test}
\end{figure}

A few comments are in order here. We note that the unpenalized convex relaxation \eqref{relaxed_trinv_problem} and the standard greedy algorithm do not perform well in these tests. Both algorithms perform very poorly for even modestly sized graphs. The exponential penalty and the Wo-Bi algorithm consistently perform the best and are usually comparable to one another. Furthermore, the norm penalty algorithm performs well for small graphs, but not for large graphs. Moreover, the algorithms using convex optimization clearly take the longest time to run, with the penalized versions \eqref{EP_problem} and \eqref{DC_problem} taking significantly longer. Finally, it should be noted that for $|V| = 15$, the average trace inverse value for the Wo-Bi algorithm was over 100. This algorithm relies heavily on the initialization, and although a random initialization usually produces a good output, it can occasionally perform very poorly.

\smallskip \noindent
We have found that the outputs of each algorithm make for good initializations for the Wo-Bi greedy algorithm. Below we demonstrate this by running each algorithm as above, then using the output of each algorithm as an initialization for the Wo-Bi algorithm and recording the trace inverse. 

\begin{figure}[H]
    \centering
    \includegraphics[width=0.5\linewidth]{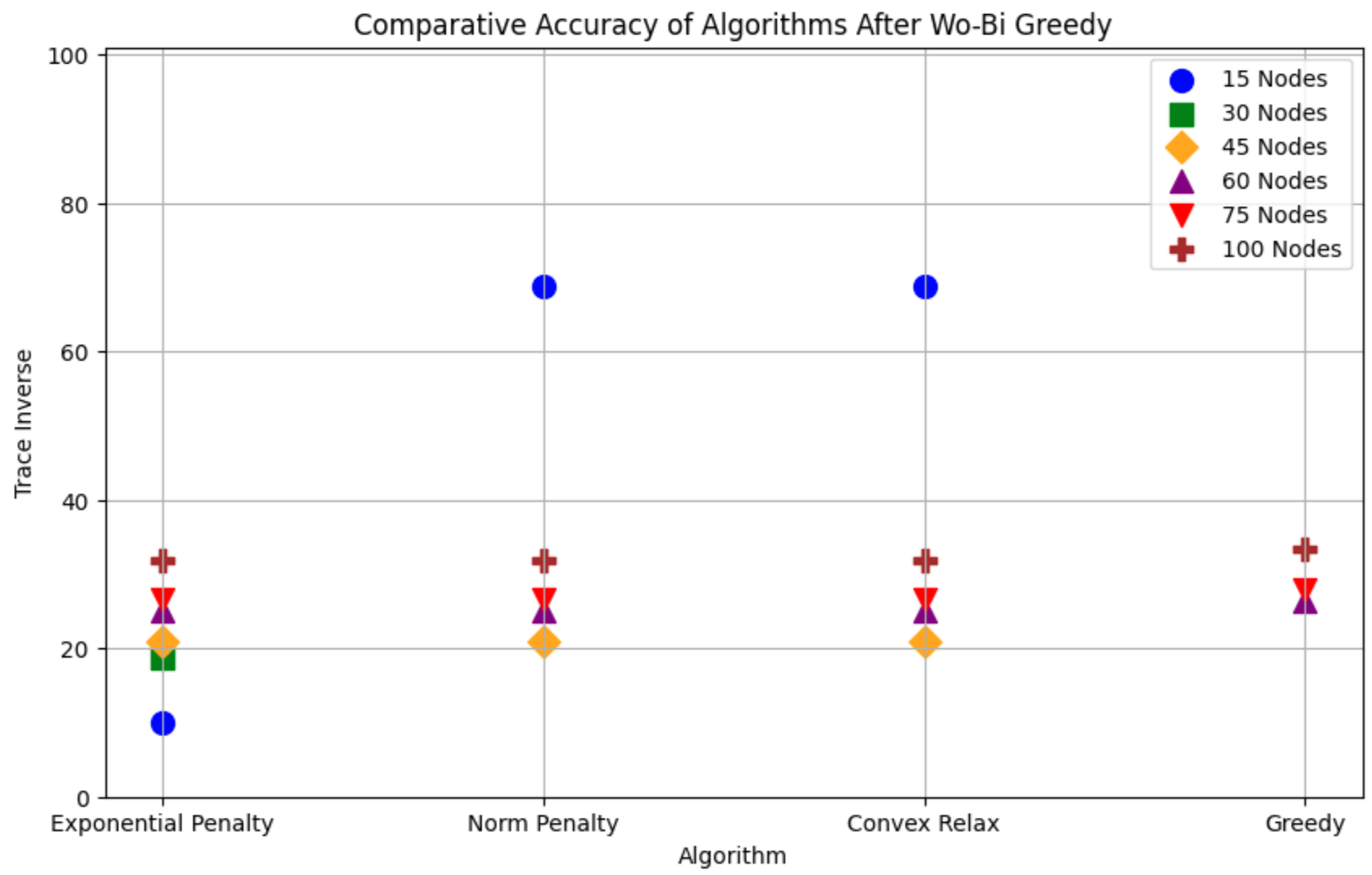}
    \caption{Average trace inverse of each algorithm when combined with the Wo-Bi algorithm.}
    \label{fig:after_arm}
\end{figure}

\smallskip\noindent
To conclude, we see that the exponential penalty combined with the Wo-Bi algorithm yield a combined algorithm more accurate than those previously proposed. However, the exponential penalty is the most computationally expensive, and would not be appropriate for very large graphs. On the other hand, Figure \ref{fig:after_arm} shows that for graphs of large sizes, the other algorithms perform similarly to the exponential penalty when combined with the Wo-Bi algorithm. Thus, it is most appropriate to use the exponential penalty combined with the Wo-Bi algorithm for small graphs, and a faster algorithm for Wo-Bi initialization on large graphs.


\bibliographystyle{siam}
\bibliography{shortref}

\begin{thebibliography}{10}

\bibitem{AKh17}
{\sc R.~Aceska and Y.~H. Kim}, {\em Scalability of frames generated by dynamical operators}, Frontiers in Applied Mathematics and Statistics, 3 (2017), p.~22.

\bibitem{AVVP23}
{\sc A.~Aguilera, C.~Cabrelli, D.~Carbajal, and V.~Paternostro}, {\em Frames by orbits of two operators that commute}, Appl. Comput. Harmon. Anal., 66 (2023), pp.~46--61.

\bibitem{ACCMP17}
{\sc A.~Aldroubi, C.~Cabrelli, A.~F. \c{C}akmak, U.~Molter, and A.~Petrosyan}, {\em Iterative actions of normal operators}, J. Funct. Anal., 272 (2017), pp.~1121--1146.

\bibitem{ACMT17}
{\sc A.~Aldroubi, C.~Cabrelli, U.~Molter, and S.~Tang}, {\em Dynamical sampling}, Applied and Computational Harmonic Analysis, 42 (2017), pp.~378--401.

\bibitem{ADK13}
{\sc A.~Aldroubi, J.~Davis, and I.~Krishtal}, {\em Dynamical sampling: time-space trade-off}, Appl. Comput. Harmon. Anal., 34 (2013), pp.~495--503.

\bibitem{AG01}
{\sc A.~Aldroubi and K.~Gr{\"o}chenig}, {\em Nonuniform sampling and reconstruction in shift-invariant spaces}, SIAM Rev., 43 (2001), pp.~585--620 (electronic).

\bibitem{AGHJKR21}
{\sc A.~Aldroubi, K.~Gr{\"o}chenig, L.~Huang, P.~Jaming, I.~Krishtal, and J.~L. Romero}, {\em Sampling the flow of a bandlimited function}, The Journal of Geometric Analysis, 31 (2021), pp.~9241--9275.

\bibitem{AHKLLV18}
{\sc A.~Aldroubi, L.~Huang, I.~Krishtal, A.~Ledeczi, R.~R. Lederman, and P.~Volgyesi}, {\em Dynamical sampling with additive random noise}, Sampl. Theory Signal Image Process., 17 (2018), pp.~153--182.

\bibitem{RBD21}
{\sc R.~Alexandru, T.~Blu, and P.~L. Dragotti}, {\em Diffusion {SLAM}: localizing diffusion sources from samples taken by location-unaware mobile sensors}, IEEE Trans. Signal Process., 69 (2021), pp.~5539--5554.

\bibitem{AP23}
{\sc J.~Ashbrock and A.~M. Powell}, {\em Dynamical dual frames with an application to quantization}, Linear Algebra Appl., 658 (2023), pp.~151--185.

\bibitem{B23}
{\sc V.~Bailey}, {\em Frames via unilateral iterations of bounded operators}, PhD thesis, Georgia Institute of Technology, 2023.

\bibitem{BISVP08}
{\sc G.~Barrenetxea, F.~Ingelrest, G.~Schaefer, M.~Vetterli, O.~Couach, and M.~Parlange}, {\em Sensorscope: Out-of-the-box environmental monitoring}, in 2008 International Conference on Information Processing in Sensor Networks (IPSN 2008), 2008, pp.~332--343.

\bibitem{BK23}
{\sc F.~Bozkurt and K.~Kornelson}, {\em Norm retrieval from few spatio-temporal samples}, J. Math. Anal. Appl., 519 (2023), pp.~Paper No. 126804, 17.

\bibitem{CMPP20}
{\sc C.~Cabrelli, U.~Molter, V.~Paternostro, and F.~Philipp}, {\em Dynamical sampling on finite index sets}, J. Anal. Math., 140 (2020), pp.~637--667.

\bibitem{CPR17}
{\sc L.~F.~O. Chamon, G.~J. Pappas, and A.~Ribeiro}, {\em The mean square error in kalman filtering sensor selection is approximately supermodular}, in 2017 IEEE 56th Annual Conference on Decision and Control (CDC), 2017, pp.~343--350.

\bibitem{CH19}
{\sc O.~Christensen and M.~Hasannasab}, {\em Frame properties of systems arising via iterated actions of operators}, Appl. Comput. Harmon. Anal., 46 (2019), pp.~664--673.

\bibitem{GVT98}
{\sc V.~Goyal, M.~Vetterli, and N.~Thao}, {\em Quantized overcomplete expansions in rn: Analysis, synthesis, and algorithms}, IEEE Transactions on Information Theory, 44 (1998), p.~16–31.

\bibitem{GRUV15}
{\sc K.~Gr\"{o}chenig, J.~L. Romero, J.~Unnikrishnan, and M.~Vetterli}, {\em On minimal trajectories for mobile sampling of bandlimited fields}, Appl. Comput. Harmon. Anal., 39 (2015), pp.~487--510.

\bibitem{HNT21}
{\sc L.~Huang, D.~Needell, and S.~Tang}, {\em Robust estimation of smooth graph signals from randomized space-time samples}, Inf. Inference, 13 (2024), pp.~Paper No. iaae012, 31.

\bibitem{HX22}
{\sc H.~Huo and L.~Xiao}, {\em Multivariate dynamical sampling in {$l^2(\Bbb Z^2)$} and shift-invariant spaces associated with linear canonical transform}, Numer. Funct. Anal. Optim., 43 (2022), pp.~541--557.

\bibitem{IBLL20}
{\sc E.~Isufi, P.~Banelli, P.~D. Lorenzo, and G.~Leus}, {\em Observing and tracking bandlimited graph processes from sampled measurements}, Signal Processing, 177 (2020), p.~107749.

\bibitem{JM22}
{\sc B.~Jaye and M.~Mitkovski}, {\em A sufficient condition for mobile sampling in terms of surface density}, Appl. Comput. Harmon. Anal., 61 (2022), pp.~57--74.

\bibitem{JMV24}
{\sc B.~Jaye, M.~Mitkovski, and M.~N. Vempati}, {\em A sharp sufficient condition for mobile sampling in terms of surface density}, Appl. Comput. Harmon. Anal., 72 (2024), pp.~Paper No. 101670, 8.

\bibitem{JTSS20}
{\sc J.~Jiang, D.~B. Tay, Q.~Sun, and S.~Ouyang}, {\em Recovery of time-varying graph signals via distributed algorithms on regularized problems}, IEEE Transactions on Signal and Information Processing over Networks, 6 (2020), pp.~540--555.

\bibitem{BJ09}
{\sc S.~Joshi and S.~Boyd}, {\em Sensor selection via convex optimization}, IEEE Transactions on Signal Processing, 57 (2009), pp.~451--462.

\bibitem{KM23}
{\sc I.~Krishtal and B.~Miller}, {\em On low-rank convex-convex quadratic fractional programming},  (2023).

\bibitem{LtPD18}
{\sc H.~A. Le~Thi and T.~Pham~Dinh}, {\em D{C} programming and {DCA}: thirty years of developments}, Math. Program., 169 (2018), pp.~5--68.

\bibitem{LMT21}
{\sc F.~Lu, M.~Maggioni, and S.~Tang}, {\em Learning interaction kernels in heterogeneous systems of agents from multiple trajectories}, J. Mach. Learn. Res., 22 (2021), pp.~Paper No. 32, 67.

\bibitem{LDV11}
{\sc Y.~Lu, P.-L. Dragotti, and M.~Vetterli}, {\em Localization of diffusive sources using spatiotemporal measurements}, in Communication, Control, and Computing (Allerton), 2011 49th Annual Allerton Conference on, September 2011, pp.~1072--1076.

\bibitem{LV09}
{\sc Y.~Lu and M.~Vetterli}, {\em Spatial super-resolution of a diffusion field by temporal oversampling in sensor networks}, in Acoustics, Speech and Signal Processing, 2009. ICASSP 2009. IEEE International Conference on, April 2009, pp.~2249--2252.

\bibitem{MN19}
{\sc J.~R. Magnus and H.~Neudecker}, {\em Matrix differential calculus with applications in statistics and Econometrics}, John Wiley \& Sons, Inc, 2019.

\bibitem{MBD15}
{\sc J.~Murray-Bruce and P.~L. Dragotti}, {\em Estimating localized sources of diffusion fields using spatiotemporal sensor measurements}, IEEE Transactions on Signal Processing, 63 (2015), pp.~3018--3031.

\bibitem{Pes08}
{\sc I.~Pesenson}, {\em Sampling in {P}aley-{W}iener spaces on combinatorial graphs}, Trans. Amer. Math. Soc., 360 (2008), pp.~5603--5627.

\bibitem{RCLV11}
{\sc J.~Ranieri, A.~Chebira, Y.~M. Lu, and M.~Vetterli}, {\em Sampling and reconstructing diffusion fields with localized sources}, in Acoustics, Speech and Signal Processing (ICASSP), 2011 IEEE International Conference on, May 2011, pp.~4016--4019.

\bibitem{RDCV12}
{\sc J.~Ranieri, I.~Dokmani{\'c}, A.~Chebira, and M.~Vetterli}, {\em Sampling and reconstruction of time-varying atmospheric emissions}, in 2012 IEEE International Conference on Acoustics, Speech and Signal Processing (ICASSP), 2012, pp.~3673--3676.

\bibitem{STWZ20}
{\sc H.~Schaeffer, G.~Tran, R.~Ward, and L.~Zhang}, {\em Extracting structured dynamical systems using sparse optimization with very few samples}, Multiscale Model. Simul., 18 (2020), pp.~1435--1461.

\bibitem{T10}
{\sc R.~Tessera}, {\em Left inverses of matrices with polynomial decay}, J. Funct. Anal., 259 (2010), pp.~2793--2813.

\bibitem{UZ21}
{\sc A.~Ulanovskii and I.~Zlotnikov}, {\em Reconstruction of bandlimited functions from space-time samples}, J. Funct. Anal., 280 (2021), pp.~Paper No. 108962, 14.

\bibitem{Unser:00}
{\sc M.~Unser}, {\em Sampling-50 years after shannon}, Proceedings of the IEEE, 88 (2000), pp.~569--587.

\bibitem{YR03}
{\sc A.~L. Yuille and A.~Rangarajan}, {\em The concave-convex procedure}, Neural Computation, 15 (2003), pp.~915--936.

\bibitem{ZLL17_2}
{\sc Q.~Zhang, R.~Li, and B.~Liu}, {\em Periodic nonuniform dynamical sampling in $\ell^2({Z})$ and shift-invariant spaces}, Numerical Functional Analysis and Optimization, 38 (2017), pp.~395--407.

\end{thebibliography}

\end{document}